\newtheorem{theorem}{Theorem}
\newtheorem{definition}[theorem]{Definition}
\newtheorem{lemma}[theorem]{Lemma}
\newtheorem{corollary}[theorem]{Corollary}
\newcommand{\algname}[1]{{\rm{\texttt{#1}}}}
\renewcommand{\fnum@algorithm}{\fname@algorithm}
\algnewcommand\algorithmicswitch{\textbf{switch}}
\algnewcommand\algorithmiccase{\textbf{case}}
\newenvironment{revision}[1]{\color{black}}{\color{black}}
\begin{document}

\begin{frontmatter}
\title{Locally Defined Independence Systems on Graphs}
\author[1]{Yuki Amano}
\ead{ukiamano@kurims.kyoto-u.ac.jp}
\affiliation[1]{organization={Research Institute for Mathematical Sciences, Kyoto University},
            city={Kyoto},
            postcode={606-8502}, 
            country={Japan}}
\date{}


\begin{abstract}
    The maximization for the independence systems defined on graphs is a generalization of combinatorial optimization problems such as the maximum $b$-matching, the unweighted MAX-SAT, the matchoid, and the maximum timed matching problems. In this paper, we consider the problem under the local oracle model to investigate the global approximability of the problem by using the local approximability. We first analyze two simple algorithms \algname{FixedOrder} and \algname{Greedy} for the maximization under the model, which shows that they have no constant approximation ratio. Here algorithms \algname{FixedOrder} and \algname{Greedy} apply local oracles with fixed and greedy orders of vertices, respectively. We then propose two approximation algorithms for the $k$-degenerate graphs, whose approximation ratios are $\alpha +2k -2$ and $\alpha k$, where $\alpha$ is the approximation ratio of local oracles. The second one can be generalized to the hypergraph setting. We also propose an $(\alpha + k)$-approximation algorithm for bipartite graphs, in which the local independence systems in the one-side of vertices are $k$-systems with independence oracles.
\end{abstract}
\begin{keyword}
    Independence system \sep Approximation algorithm \sep Local oracle model \sep Degeneracy
\end{keyword}

\end{frontmatter}

\section{Introduction}
The maximization for independence system is one of the most fundamental combinatrial optimization problems \cite{welsh1976matroid, cook1997combinatorial, schrijver2003combinatorial}. An independence system is a pair $(E, \mathcal{I})$ of a finite set $E$ and a family $\mathcal{I} \subseteq 2^{E}$ that satisfies 
\begin{align}
    &\mathcal{I}\ \text{contains empty set, i.e.,}\ \emptyset \in \mathcal{I} \text{, and} \label{IS1}\\
    &J \in \mathcal{I}\ \text{implies}\ I \in \mathcal{I}\ \text{for any}\ I \subseteq J \subseteq E\label{IS2}.
\end{align}
Here a member $I$ in $\mathcal{I}$ is called an {\em independent set}. The property \eqref{IS2} means that $\mathcal{I}$ is downward closed.
The maximization problem for an independence system is to find an independent set with the maximum cardinality. This problem includes, as a special case, the maximum independent set of a graph, the maximum matching, the maximum set packing and the matroid (intersection) problems \cite{oxley2006matroid, welsh1976matroid, schrijver2003combinatorial}.

In this paper, we consider the following independence systems defined on graphs.
Let $G=(V, E)$ be a graph with a vertex set $V$ and an edge set $E$. For a vertex $v$ in $V$, let $E_v$ denotes the set of edges incident to $v$. In our problem setting, each vertex $v$ has a local independence system $(E_v, \mathcal{I}_v)$, i.e., $\mathcal{I}_v \subseteq 2^{E_v}$, and we consider the independence system $(E, \mathcal{I})$ defined by
\begin{align}
    \mathcal{I} = \{I \subseteq E \mid I \cap E_v \in \mathcal{I}_v\ \text{for all}\ v \in V \}.\label{locallyIS}
\end{align}
Namely, $(E, \mathcal{I})$ is obtained by concatenating local independence systems $(E_v, \mathcal{I}_v)$, and is called {\em an independence system defined on a graph} $G$. We assume without loss of generality that $\{e\} \in \mathcal{I}$ holds for any $e \in E$, since otherwise the underlying graph $G$ can be replaced by $G^\prime=(V, E\setminus \{e\})$.
In this paper, we consider the maximization problem for it, i.e., for a given graph $G=(V, E)$ with local independence systems $(E_v, \mathcal{I}_v)$, our problem is described as
\begin{align}
    \begin{split}
        \text{maximize}&\quad |I|\\
        \text{subject to}&\quad I \cap E_v \in \mathcal{I}_v\ \text{for all}\ v \in V\\
        &\quad I \subseteq E.
    \end{split}\label{problem}
\end{align}
Note that any independence system $(E, \mathcal{I})$ is viewed as an independence system defined on a star.  As an example of the problem \eqref{problem}, for $b:V \to \mathbb{Z}_+$, let $\mathcal{I}_v = \{I \subseteq E_v \mid |I| \leq b(v) \}$, i.e., $\mathcal{I}_v$ is $b(v)$-bounded for every $v$ in $V$. Then \eqref{locallyIS} denotes the family of $b$-matchings in $G$. In particular, if $b \equiv 1$, it corresponds to the family of matchings in $G$ \cite{lovasz1986matching}. More generally, if $(E_v, \mathcal{I}_v)$ is a matroid for every $v$ in $V$, then the family \eqref{locallyIS} is a so-called {\em matchoid} \cite{jenkyns1975matchoids}.
 
The maximum $b$-matching and matchoid problems are well-studied combinatorial optimization problems \cite{lovasz1986matching,jenkyns1975matchoids}. It is known that the maximum matchoid problem is NP-hard \cite{lovasz1978matroid} and it is $3/2$-approximable \cite{fujito19932, lee2013matroid}.

The unweighted maximum satisfiability problem (MAX-SAT) is another example of the problem \eqref{problem}. The unweighted MAX-SAT is the problem to find a variable assignment that maximizes the number of the satisfied clauses in a given conjunctive normal form (CNF). For a variable set $X$, let $\varphi = \bigwedge_{c \in C} c$ be a CNF, where $C$ denotes a set of clauses with variables in $X$. Define a bipartite graph $G = (V=X \cup C, E)$ and local independence systems $(E_v, \mathcal{I}_v)$ by

\begin{align*}
    E &= \{(x, c) \in X \times C \mid x\ \text{is contained in}\ c\}\\
    \mathcal{I}_v &= 
    \begin{cases}
    \{I \subseteq E_v \mid \text{either}\ I \subseteq C_+(v)\ \text{or}\ I \subseteq C_-(v) \} &\text{if}\ v \in X \\
    \{I \subseteq E_v \mid |I| \leq 1\} &\text{if}\ v \in C,
    \end{cases}
\end{align*}
where, for a variable $x \in X$, $C_+(x)$ and $C_-(x)$ respectively denote the sets of clauses in $C$ that contain literals $x$ and $\overline{x}$. Therefore the unweighted MAX-SAT is an example of the problem \eqref{problem}. As is well-known, the unweighted MAX-SAT is NP-hard and approximable in ratio $1.255$ \cite{avidorimproved}. 

(Edge-)temporal graphs \cite{kostakos2009temporal} were introduced to model dynamic network topologies where the edge set vary with time. Namely, a {\em temporal graph} is a graph $G=(V,E)$ with given time labels $L_e \subseteq T$ for all $e$ in $E$, where $T$ is a given finite set of time labels. For a temporal graph, a subset $M$ of $E$ is a {\em timed matching} if $L_e$ and $L_f$ are disjoint for any adjacent pair of edges $e$ and $f$ in $M$. By defining local independence systems $(E_v, \mathcal{I}_v)$ by $\mathcal{I}_v = \{I \subseteq E_v \mid L_e \cap L_f = \emptyset\ \text{for all}\ e,f \in I\}$, the maximum timed matching can be formulated as the problem \eqref{problem}.
It is known that the maximum timed matching problem is NP-hard even if the given graph $G$ is bipartite, and is solvable in polynomial time if $G$ is a tree and every $L_e$ is represented as an interval for a given order of time labels \cite{mandal20200}.

In this paper, we consider the problem \eqref{problem} by making use of local oracles $\mathcal{A}_v$ for each $v$ in $V$. 
The oracle $\mathcal{A}_v$ computes an $\alpha$-approximate independent set $I$ of $(F, \mathcal{I}_v[F])$ for a given set $F \subseteq E_v$, where $\mathcal{I}_v[F]$ is the restriction of $\mathcal{I}_v$ to $F$, i.e., $\mathcal{I}_v[F] = \{I \cap F \mid I \in \mathcal{I}_v\}$.
That is, the oracle $\mathcal{A}_v: 2^{E_v} \to 2^{E_v}$ satisfies
\begin{align}
    \mathcal{A}_v(F) &\in \mathcal{I}_v[F] \label{ALO1}\\
    \alpha\, |\mathcal{A}_v(F)| &\geq \max_{J \in \mathcal{I}_v[F]}|J|\label{ALO2}.
\end{align}
We call $\mathcal{A}_v$ an {\em $\alpha$-approximation local} oracle. It is also called an {\em exact local oracle} if $\alpha =1$. In this paper, we assume the monotonicity of $\mathcal{A}_v$, i.e., $|\mathcal{A}_v(S)| \leq |\mathcal{A}_v(T)|$ holds for the subsets $S \subseteq T \subseteq E_v$, which is a natural assumption on the oracle since it deals with independence system. We study this oracle model to investigate the global approximability of the problem \eqref{problem} by using the local approximability.
The oracle model was used in \cite{chekuri2004maximum} for the maximum coverage problem with group constraints, where the oracle model is regarded as a generalization of greedy algorithms. It is shown that the maximum coverage problem is $(\alpha + 1)$-approximable in this oracle model.
We remark here that a greedy algorithm for the maximum cardinality matching problem can be viewed as a $2$-approximation algorithm under the exact local oracle model. 

For the above reductions of the maximum matchoid problem and the unweighted MAX-SAT, their local maximization problems for $(E_v, \mathcal{I}_v)$ can be solved exactly, while their global maximization problems are NP-hard. This means that the problem \eqref{problem} seems to be intractable, even if exact local oracles are given. 

In this paper, we first propose two natural algorithms for the problem \eqref{problem}, where the first one applies local oracles $\mathcal{A}_v$ in the order of the vertices $v$ that is fixed in advance, while the second one applies local oracles in the greedy order of vertices $v_1, \dots, v_n$, where $n = |V|$ and 
\begin{align*}
    v_i \in \arg \max \{|\mathcal{A}_{v}(E_{v} \cap F^{(i)})| \mid v\in V \setminus\{v_1,\dots,v_{i-1}\}\}&& \text{for}\ i = 1, \dots, n.
\end{align*}
Here the subset $F^{(i)} \subseteq E$ is a set of available edges during the $i$-th iteration. 

We show that the first algorithm guarantees an approximation ratio $(\alpha + n-2)$, and the second algorithm guarantees an approximation ratio $\rho(\alpha , n)$, where $\rho$ is the function of $\alpha$ and $n$ defined as 
        \begin{empheq}[left={\rho(\alpha , n) = \empheqlbrace}]{align*}
            &\alpha + \frac{2\alpha-1}{2\alpha}(n-1) - \frac{1}{2}\quad &&\text{if}\ (\alpha-1)(n-1) \geq \alpha(\alpha + 1)\\
            &\alpha + \frac{\alpha}{\alpha+1}(n-1)\quad &&\text{if}\ \alpha \leq (\alpha-1)(n-1) < \alpha(\alpha + 1)\\
            &\frac{n}{2}\quad &&\text{if}\ (\alpha-1)(n-1) < \alpha.
        \end{empheq}
We also show that both of approximation ratios are {\it almost tight} for these algorithms. 

We then consider two subclasses of the problem \eqref{problem}.
We provide two approximation algorithms for the $k$-degenerate graphs, whose approximation ratios are $\alpha +2k -2$ and $\alpha k$. Here, a graph is {\em $k$-degenerate} if any subgraph has a vertex of degree at most $k$. This implies for example that the algorithms finds an $\alpha$-approximate independent set for the problem if a given graph is a tree. This is best possible, because the local maximization is not approximable with $c\ (< \alpha)$. We also show that the second algorithm can be generalized to the hypergraph setting.

We next provide an $(\alpha + k)$-approximation algorithm for the problem when a given graph is bipartite and local independence systems for one side are all $k$-systems with independence oracles. 
Here an independence system $(E, \mathcal{I})$ is called a {\em $k$-system} if for any subset $F \subseteq E$, any two maximal independent sets $I$ and $J$ in $\mathcal{I}[F]$ satisfy $k|I| \geq |J|$, and its independence oracle is to decide if a given subset $J \subseteq E$ belongs to $\mathcal{I}$ or not.

The rest of the paper is organized as follows. In Section 2, we describe two natural algorithms for the problem \eqref{problem} and analyze their approximation ratios. Section 3 provides approximation algorithms for the problem in which a given graph $G$ has bounded degeneracy. Section 4 also provides an approximation algorithm for the problem in which a given graph $G$ is bipartite, and all the local independence systems of the one side of vertices are $k$-systems. Section 5 defines independence systems defined on hypergraphs and generalizes algorithms to the hypergraph case.
\section{Local subpartitions and greedy algorithms}
    In this section, we first define local subpartitons of edges and analyze two natural algorithms for the problem \eqref{problem}. Local subpartitons of edges is used for constructing and analyzing approximation algorithms for the problem.
    
    \begin{definition}
        For a graph $G=(V, E)$ and subsets $P_v \subseteq E_v$ for all $v \in V$, the collection $\mathcal{P} = \{P_v \mid v \in V\}$ is called a {\em local subpartition} if $P_u \cap P_v = \emptyset$ for all distinct pairs $u$ and $v$ of vertices. The set $R = E \setminus (\bigcup_{P \in \mathcal{P}} P)$ is called the {\em residual} edge set of $\mathcal{P}$.
    \end{definition}
    By definition, for a local subpartition $\mathcal{P} = \{P_v \mid v \in V\}$, the set $P_v$ may be empty for some $v$ in $V$.
    Let $\mathcal{P} = \{P_v \mid v \in V\}$ be a local subpartition of $E$, and for every $v \in V$, let $I_v\in \mathcal{I}_v[P_v]$ be a subset of $P_v$ which is independent in $\mathcal{I}_v$. Then the union $I = \bigcup_{v \in V} I_v$ may {\it not} be an independent set of $\mathcal{I}$. If it is independent, we have the following lemma, where we recall that $\mathcal{A}_v\ (v \in V)$ denotes an $\alpha$-approximate local oracle.
        

    \begin{lemma}\label{lem:subpart}
        For a local subpartition $\mathcal{P} = \{P_v \mid v \in V\}$, let $R$ be the residual edge set of $\mathcal{P}$.
        If $I = \bigcup_{v \in V} \mathcal{A}_v(P_v)$ is an independent set in $\mathcal{I}$, then it guarantees an approximate ratio of $\alpha + \max_{J \in \mathcal{I}[R]}|J|/|I|$ for the problem \eqref{problem}. 
    \end{lemma}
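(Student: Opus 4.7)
The plan is to compare an arbitrary independent set $J^\ast \in \mathcal{I}$ (think of it as an optimal one) with $I$ by splitting it along the local subpartition. Since $\mathcal{P}$ together with $R$ partitions $E$, I would write
\[
    J^\ast \;=\; \Bigl(\bigcup_{v \in V} (J^\ast \cap P_v)\Bigr) \;\cup\; (J^\ast \cap R),
\]
where the union over $v$ is disjoint because the $P_v$ are pairwise disjoint. Bounding $|J^\ast|$ then reduces to bounding the two summands separately.

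For each $v \in V$, the set $J^\ast \cap P_v$ is contained in $J^\ast \cap E_v \in \mathcal{I}_v$, and since it is a subset of $P_v$, it belongs to $\mathcal{I}_v[P_v]$. Hence it is a feasible independent set for the local maximization problem on $P_v$, and the $\alpha$-approximation guarantee \eqref{ALO2} applied to $\mathcal{A}_v(P_v)$ gives $|J^\ast \cap P_v| \leq \alpha\,|\mathcal{A}_v(P_v)|$. Summing over $v$ and using disjointness on both sides yields
\[
    \sum_{v \in V} |J^\ast \cap P_v| \;\leq\; \alpha \sum_{v \in V} |\mathcal{A}_v(P_v)| \;=\; \alpha\,|I|,
\]
where the last equality uses that the $\mathcal{A}_v(P_v) \subseteq P_v$ are pairwise disjoint, together with the assumption that $I = \bigcup_{v} \mathcal{A}_v(P_v)$ is actually independent (and in particular the union is disjoint).

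For the residual part, $J^\ast \cap R$ is a subset of $R$ and lies in $\mathcal{I}$ by downward closedness, so $J^\ast \cap R \in \mathcal{I}[R]$ and therefore $|J^\ast \cap R| \leq \max_{J \in \mathcal{I}[R]}|J|$. Combining the two bounds gives $|J^\ast| \leq \alpha|I| + \max_{J \in \mathcal{I}[R]}|J|$, which, after dividing by $|I|$, is exactly the claimed approximation ratio. There is no real obstacle: the only subtlety is checking that $I^\ast \cap P_v$ indeed lies in $\mathcal{I}_v[P_v]$ (which is immediate from \eqref{locallyIS} and the definition of the restriction) and invoking the hypothesis that $I$ itself is independent so that $|I| = \sum_v |\mathcal{A}_v(P_v)|$.
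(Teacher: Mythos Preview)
Your argument is correct and essentially identical to the paper's proof: both take an arbitrary independent set, split it over the local subpartition and the residual, use the $\alpha$-approximation guarantee \eqref{ALO2} on each $P_v$, and bound the residual part by $\max_{J \in \mathcal{I}[R]}|J|$. The only minor remark is that the disjointness of the $\mathcal{A}_v(P_v)$ (hence $|I|=\sum_v |\mathcal{A}_v(P_v)|$) already follows from $\mathcal{A}_v(P_v)\subseteq P_v$ and the $P_v$ being pairwise disjoint, independently of the hypothesis that $I\in\mathcal{I}$.
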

    \begin{proof}
        For an independent set $K \in \mathcal{I}$, the set $I=\bigcup_{v \in V} \mathcal{A}_v(P_v)$ satisfies the following inequalities:
        \begin{align*}
            \alpha\, |I| &= \alpha\sum_{v \in V}|\mathcal{A}_v(P_v)| \geq \sum_{v \in V}|K \cap P_v| = |K| - |K \cap R|.
        \end{align*}
        This implies 
        \begin{align*}
            |K| \leq (\alpha + |K \cap R|/|I|)\, |I| \leq (\alpha + \max_{J \in \mathcal{I}[R]}|J|/|I|)\, |I|,
        \end{align*}
        which completes the proof.
    \end{proof}

    All the algorithms proposed in this paper construct local subpartitons of edges, and Lemma \ref{lem:subpart} is used to analyze their approximation ratio.
    
    Let us then see the following two simple algorithms which cannot guarantee any constant approximation ratio, even if exact local oracles are available. 
    The first one called \algname{FixedOrder} makes use of local oracles $\mathcal{A}_v$ in the order of the vertices $v \in V$ that is fixed in advance.
    \begin{algorithm}
        \caption{\rm{\algname{FixedOrder}}}
        \label{alg:greedy1}
        \begin{algorithmic}
        \State\Comment{\textwidth}{($v_1, \dots, v_n$) is a given vertex order.}
        \State $F := E$.\vspace{3pt}
        \For{$v = v_1, \dots, v_n$}
            \State $P_{v} := E_{v} \cap F$.\vspace{3pt}
            \State $R_{v} := \left(\left({\bigcup_{\substack{u \in V:\\ (u, {v}) \in \mathcal{A}_{v}(P_{v})}}}E_u \right) \setminus P_{v}\right) \cap F$.\vspace{3pt}
            \State $F := F \setminus (P_{v} \cup R_{v})$.\vspace{3pt}
        \EndFor
        \State $I := \bigcup_{v \in V}\mathcal{A}_v(P_v)$.\vspace{3pt}
        \State $R := \bigcup_{v \in V}R_v$.\vspace{3pt}
        \State Output $I$ and halt.
        \end{algorithmic}
    \end{algorithm}
    Algorithm \algname{FixedOrder} constructs a local subpartition $\mathcal{P} = \{ P_v \mid v \in V \}$ and the residual edge set $R$ of $\mathcal{P}$, which will be proven in the next theorem.
    In order to ensure that $I = \bigcup_{v \in V}\mathcal{A}_v(P_v)$ is independent in Lemma \ref{lem:subpart}, the algorithm maintains $F$ as a candidate edge set during the iteration. 
    The following theorem provides the approximation ratio of the algorithm, which is {\it almost tight} for the algorithm. In fact, it is tight if $\alpha$ is an integer.
    \begin{theorem}
        Algorithm \algname{FixedOrder} computes an $(\alpha + n-2)$-approximate solution for the maximization for the problem \eqref{problem} under the approximate local oracle model, and the approximation ratio of the algorithm is at least $\lfloor \alpha \rfloor+ n-2$.
    \end{theorem}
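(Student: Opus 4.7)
The plan is to combine the removal-rule structure of \algname{FixedOrder} with Lemma~\ref{lem:subpart}, and then to exhibit a matching adversarial instance. For the upper bound, I would first check that the algorithm produces a local subpartition $\mathcal{P}=\{P_v\mid v\in V\}$ with residual $R=\bigcup_v R_v$: the sets $P_v$ are pairwise disjoint since each is removed from $F$ the moment it is defined. I would next verify that $I=\bigcup_v\mathcal{A}_v(P_v)$ is independent in $\mathcal{I}$, using the key observation that once $\mathcal{A}_v(P_v)$ picks an edge $(u,v)$ the subsequent removal of $R_v$ strips $F$ of every other edge in $E_u$. Consequently a vertex $u$ can appear as the non-$v$ endpoint of a selected edge in at most one iteration, and when $u$ is itself processed its set $P_u$ is empty; the singleton assumption $\{e\}\in\mathcal{I}_u$ then yields $I\cap E_u=\{(u,v)\}\in\mathcal{I}_u$, while vertices that never appear as partners contribute $I\cap E_u=\mathcal{A}_u(P_u)\in\mathcal{I}_u$. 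Hence $I\in\mathcal{I}$ and Lemma~\ref{lem:subpart} applies.

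To conclude the upper bound I would bound $\max_{J\in\mathcal{I}[R]}|J|$ by $(n-2)|I|$. Let $N_v=\{u\mid(u,v)\in\mathcal{A}_v(P_v)\}$ and $N=\bigcup_v N_v$. The same removal argument shows that the $N_v$ are pairwise disjoint, so $|N|=\sum_v|\mathcal{A}_v(P_v)|=|I|$. Every edge of $R$ is incident to some vertex of $N$ by the construction of $R_v$, whence $|J|\le\sum_{u\in N}|J\cap E_u|$ for any $J\in\mathcal{I}[R]$. For $u\in N_v$ the edge $(u,v)$ lies in $P_v\subseteq E\setminus R$, hence $J\cap E_u\subseteq E_u\setminus\{(u,v)\}$ and $|J\cap E_u|\le n-2$. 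Summing over $N$ gives $|J|\le(n-2)|I|$, so by Lemma~\ref{lem:subpart} the algorithm is $(\alpha+n-2)$-approximate.

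For the matching lower bound I would take $n\ge\lfloor\alpha\rfloor+1$ vertices $v_1,\dots,v_n$, the edge set $\{(v_1,v_j)\mid 2\le j\le\lfloor\alpha\rfloor+1\}\cup\{(v_2,v_j)\mid 3\le j\le n\}$, and $\mathcal{I}_v=2^{E_v}$ for every $v$, so that the optimum is the full edge set of size $\lfloor\alpha\rfloor+(n-2)$. Choose a monotone oracle at $v_1$ that returns $\{(v_1,v_2)\}$ on the input $P_{v_1}=E_{v_1}$; this is $\alpha$-approximate since $\alpha\cdot 1\ge\lfloor\alpha\rfloor=|P_{v_1}|$. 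The first iteration then places every edge $(v_2,v_j)$ with $j\ge 3$ into $R_{v_1}$, emptying $F$, so the algorithm halts with $|I|=1$ and the ratio is exactly $\lfloor\alpha\rfloor+n-2$. The main obstacle is the independence verification, which relies on the precise interplay between the processing order and the removals performed by $R_v$; once it is in place, the residual bound and the adversarial construction are routine.
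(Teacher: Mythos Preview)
Your proposal is correct and follows essentially the same route as the paper. Both arguments verify that $\{P_v\}$ is a local subpartition with residual $R$, establish $I\in\mathcal{I}$ via the same removal observation (a selected partner $u$ has $E_u$ stripped from $F$, so later $P_u=\emptyset$), bound the residual contribution by $(n-2)|I|$ using $|E_u\setminus\{(u,v)\}|\le n-2$ summed over the $|I|$ partners, and invoke Lemma~\ref{lem:subpart}; your lower-bound instance is the paper's construction with $s=v_1$, $t=v_2$. The only cosmetic difference is that you bound $\max_{J\in\mathcal{I}[R]}|J|$ through the partner cover $N$, whereas the paper bounds $|R|$ directly via $\sum_v|R_v|$, but the underlying counting is identical.
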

    \begin{proof}
        For a vertex $v \in V$, let $P_v$ and $R_v$ denote the edge sets computed in the algorithm, and let $I= \bigcup_{v \in V}\mathcal{A}_v(P_v)$ and $R = \bigcup_{v \in V} R_v$. Then we have $P_v \subseteq E_v$ for any $v \in V$ and $P_{v_i} \cap E_{v_j} = \emptyset$ if $j < i$. These imply that $\mathcal{P} = \{P_v \mid v \in V\}$ is a local subpartition of $E$. Note that $R_u \cap P_v = \emptyset$ for any $u$ and $v$ in $V$, which implies that $R \cap P_v = \emptyset$ for any $v$ in $V$. Since $P_{v_i} = E_{v_i} \setminus \bigcup_{j < i}(P_{v_j} \cup R_{v_j})$ also holds for any $v_i \in V$, $R$ is the residual edge set of $\mathcal{P}$. 
        Moreover, we have
        \begin{align}
            |R| \leq \sum_{v \in V}|R_v| \leq \sum_{v \in V}\sum_{\substack{u \in V:\\ (u, v) \in \mathcal{A}_v(P_v)}}|E_u \setminus P_v| \leq (n-2)|I|.
        \end{align}
        For any vertex $v_i \in V$, if an edge $(v_i, v_j)$ with $i > j$ is contained in $\mathcal{A}_{v_j}(P_{v_j})$, then $P_{v_l} \cap E_{v_i} = \emptyset$ holds for all $l > j$. This implies that $|I \cap E_v| \leq 1$ or $I \cap E_v = \mathcal{A}_{v}(P_{v})$ for all $v \in V$, which concludes that $I \in \mathcal{I}$. Therefore by Lemma \ref{lem:subpart}, $I$ guarantees an approximation ratio of $\alpha + n-2$ for the problem \eqref{problem}.
        
        We next show a lower bound of the approximation ratio of the algorithm.
        Define a graph $G=(V,E)$ by
        \begin{align*}
            V &= \{s, t\} \cup \{v_i \mid i = 1, \ldots, n-2\}\\
            E &= E_s \cup E_t,
        \end{align*}
        where
        \begin{align*}
            E_s &= \{(s, t)\} \cup \{(s, v_i) \mid i = 1, \ldots, \lfloor \alpha \rfloor - 1\}\ \text{and}\\
            E_t &= \{(s, t)\} \cup \{(t, v_i) \mid i = 1, \ldots, n-2\}.
        \end{align*}
        For every vertex $v \in V$, let $\mathcal{I}_v = 2^{E_v}$.
        By definition, $E$ is an optimal solution for this instance of the problem \eqref{problem}. On the other hand, if the algorithm first chooses a vertex $s$ and the local oracle $\mathcal{A}_{s}$ returns $\{(s, t)\}$, then the algorithm outputs $I = \{(s, t)\}$. Since $|E| = \lfloor \alpha \rfloor + n-2$ and $|I| = 1$, $\lfloor \alpha \rfloor + n-2$ is a lower bound of the approximation ratio of the algorithm.
    \end{proof}
    
    The second algorithm called \algname{Greedy} makes use of local oracles $\mathcal{A}_v$ in a greedy order of vertices $v_1, \dots, v_n$, where
    \begin{align*}
        v_i \in \arg \max \{|\mathcal{A}_{v}(E_{v} \cap F^{(i)})| \mid v\in V \setminus\{v_1,\dots,v_{i-1}\}\}&& \text{for}\ i = 1, \dots, n.
    \end{align*}
    Here the subset $F^{(i)} \subseteq E$ is the candidate edge set in the $i$-th round of \algname{Greedy}.
    \begin{algorithm}
        \caption{\rm{\algname{Greedy}}}
        \label{alg:greedy2}
        \begin{algorithmic}
        \State $F := E$.\vspace{3pt}
        \State $W := V$.\vspace{3pt}
        \While{$W \not = \emptyset$}
            \State $v \in \arg\max_{w \in W}|\mathcal{A}_w(E_w \cap F)|$.\vspace{3pt}
            \State $W := W \setminus \{v\}$.\vspace{3pt}
            \State $P_{v} := E_{v} \cap F$.\vspace{3pt}
            \State $R_{v} := \left(\left(\bigcup_{\substack{u \in V:\\ (u, v) \in \mathcal{A}_v(P_{v})}}E_u \right)\setminus P_{v} \right) \cap F$.\vspace{3pt}
            \State $F := F \setminus (P_{v} \cup R_{v})$.\vspace{3pt}
        \EndWhile
        \State $I := \bigcup_{v \in V}\mathcal{A}_v(P_v)$\vspace{3pt}
        \State $R := \bigcup_{v \in V}R_v$.\vspace{3pt}
        \State Output $I$ and halt.
        \end{algorithmic}
    \end{algorithm}
    For the analysis of \algname{Greedy}, we use the following lemma which is slightly different from Lemma \ref{lem:subpart}.
    
    \begin{lemma}\label{lem:subpart-2}
        Let $\mathcal{P} = \{P_v \mid v \in V\}$ be a local subpartition of $E$, and for the residual $R$ of $\mathcal{P}$, let $\{R_v \mid v \ \in V\}$ be a partition of $R$ that satisfies $R_v = \emptyset$ for all vertices $v$ with $P_v = \emptyset$.
        If $I = \bigcup_{v \in V} \mathcal{A}_v(P_v)$ is an independent set in $\mathcal{I}$, then it guarantees approximation ratio $\beta$ for the problem \eqref{problem}, where
        \begin{align*}
            \beta = \max_{v \in V: P_v \not = \emptyset }\ \max_{J \in \mathcal{I}[P_v \cup R_v]}\frac{|J|}{|\mathcal{A}_v(P_v)|}.
        \end{align*}
    \end{lemma}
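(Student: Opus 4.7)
The plan is to mirror the proof of Lemma \ref{lem:subpart} but replace the global residual bound by a per-vertex bound that can be charged to $\mathcal{A}_v(P_v)$. The key structural fact is that, since $\mathcal{P}$ is a local subpartition with residual $R$ and $\{R_v \mid v \in V\}$ partitions $R$, the sets $\{P_v \cup R_v \mid v \in V\}$ form a partition of the entire edge set $E$. This lets me decompose the optimal solution vertex-by-vertex rather than splitting it into a "covered" part and a residual part.

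First I would fix an arbitrary independent set $K \in \mathcal{I}$ and write
\begin{align*}
|K| \;=\; \sum_{v \in V} |K \cap (P_v \cup R_v)|.
\end{align*}
Because $\mathcal{I}$ is downward closed, $K \cap (P_v \cup R_v) \in \mathcal{I}$, and hence, taking $I := K \cap (P_v \cup R_v)$ in the definition of restriction, $K \cap (P_v \cup R_v) \in \mathcal{I}[P_v \cup R_v]$. For vertices $v$ with $P_v \neq \emptyset$, the definition of $\beta$ then gives
\begin{align*}
|K \cap (P_v \cup R_v)| \;\leq\; \max_{J \in \mathcal{I}[P_v \cup R_v]} |J| \;\leq\; \beta\,|\mathcal{A}_v(P_v)|.
\end{align*}
For vertices with $P_v = \emptyset$, the hypothesis $R_v = \emptyset$ makes $K \cap (P_v \cup R_v)$ empty, so these terms contribute nothing. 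Summing over $v$ and using that the sets $\mathcal{A}_v(P_v) \subseteq P_v$ are pairwise disjoint (hence $|I| = \sum_v |\mathcal{A}_v(P_v)|$) yields $|K| \leq \beta\,|I|$, which is exactly the claimed approximation ratio.

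There is no real obstacle: the argument is purely structural. The only point requiring care is the $P_v = \emptyset$ corner case, which is precisely why the assumption "$R_v = \emptyset$ whenever $P_v = \emptyset$" appears in the statement; without it, the ratio $|J|/|\mathcal{A}_v(P_v)|$ would be undefined for those vertices and the bound would fail. Independence of $I$ is used implicitly to ensure that $|I|$ equals the sum of the local sizes and that $I$ is a feasible candidate for the problem, matching the role played by independence in Lemma \ref{lem:subpart}.
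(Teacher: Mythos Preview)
Your proposal is correct and follows essentially the same approach as the paper: both decompose an arbitrary $K \in \mathcal{I}$ over the partition $\{P_v \cup R_v \mid v \in V\}$ of $E$, bound each piece by $\beta\,|\mathcal{A}_v(P_v)|$, and sum. Your version is simply more explicit about why the sets $P_v \cup R_v$ partition $E$ and about the $P_v = \emptyset$ case; one small remark is that the equality $|I| = \sum_v |\mathcal{A}_v(P_v)|$ follows from the pairwise disjointness of the $P_v$ rather than from the independence of $I$, which is needed only to make $I$ a feasible output.
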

    \begin{proof}
        For an independent set $K \in \mathcal{I}$, $I=\bigcup_{v \in V} \mathcal{A}_v(P_v)$ satisfies the following inequalities:
        \begin{align*}
            |K| \leq \sum_{v \in V} \max_{J \in \mathcal{I}[P_v \cup R_v]}|J|
            \leq \beta\sum_{v \in V: P_v \not = \emptyset} |\mathcal{A}_v(P_v)|
            = \beta\, |I|,
        \end{align*}
        which completes the proof.
    \end{proof}
    The following theorem provides the approximation ratio of \algname{Greedy}, which is {\it almost tight} for the algorithm. In fact, it is tight if $(\alpha-1)(n-1) < \alpha$.
    \begin{theorem}
        Algorithm \algname{Greedy} computes a $\rho(\alpha , n)$-approximate solution for the problem \eqref{problem}, where $\rho$ is the function of $\alpha$ and $n$ defined as 
        \begin{empheq}[left={\hspace{-5pt}\rho(\alpha , n) = \empheqlbrace}]{align}
            &\alpha + \frac{2\alpha-1}{2\alpha}(n-1) - \frac{1}{2} &&\hspace{-10pt}\text{if}\ (\alpha-1)(n-1) \geq \alpha(\alpha + 1)\label{lemeq:upperbounds1}\\
            &\alpha + \frac{\alpha}{\alpha+1}(n-1) &&\hspace{-10pt}\text{if}\ \alpha \leq (\alpha-1)(n-1) < \alpha(\alpha + 1)\label{lemeq:upperbounds2}\\
            &\frac{n}{2} &&\hspace{-10pt}\text{if}\ (\alpha-1)(n-1) < \alpha.\label{lemeq:upperbounds3}
        \end{empheq}
        Moreover, the approximation ratio is at least
        \begin{empheq}[left={\empheqlbrace}]{align*}
            &(\ref{lemeq:upperbounds1}) - \frac{\alpha}{2} &&\text{if}\ (\alpha-1)(n-1) \geq \alpha(\alpha + 1) &\\
            &(\ref{lemeq:upperbounds2}) - \left(\frac{3}{2} \alpha - \frac{1}{2} \right)&&\text{if}\ \alpha \leq (\alpha-1)(n-1) < \alpha(\alpha + 1)&\\
            &(\ref{lemeq:upperbounds3}) &&\text{if}\ (\alpha-1)(n-1) < \alpha.&
        \end{empheq}
    \end{theorem}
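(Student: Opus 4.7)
The plan is to reuse the framework of \algname{FixedOrder}'s analysis, adapted to \algname{Greedy}'s dynamic vertex order, and then to invoke Lemma \ref{lem:subpart-2} to convert per-vertex bounds into the global ratio. First I would verify that $\mathcal{P}=\{P_v\mid v\in V\}$ is a local subpartition with residual partition $\{R_v\mid v\in V\}$ and that $I=\bigcup_v\mathcal{A}_v(P_v)\in\mathcal{I}$: whenever an oracle $\mathcal{A}_v$ at step $i$ selects an edge $(v,u)$, the definition of $R_v$ removes every other edge incident to $u$ from $F$, so any later iteration processing $u$ has $P_u\cap E_v=\emptyset$ and $\mathcal{A}_u(P_u)=\emptyset$. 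Consequently $I\cap E_u$ contains at most one inherited edge, which lies in $\mathcal{I}_u$ by the paper's standing assumption; if instead $u$ is processed with $P_u\neq\emptyset$, then no earlier oracle can have chosen an edge at $u$, so $I\cap E_u=\mathcal{A}_u(P_u)\in\mathcal{I}_u$. Disjointness of the sets $R_v$ follows from the monotonicity of $F$, and $R_v=\emptyset$ whenever $P_v=\emptyset$.

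The core estimate is the per-vertex bound from Lemma \ref{lem:subpart-2}. Fix $v$ processed at step $i$ with $P_v\neq\emptyset$, put $s=|\mathcal{A}_v(P_v)|$, and let $u_1,\dots,u_s$ denote the other endpoints of the edges in $\mathcal{A}_v(P_v)$. A short subclaim shows that each $u_j$ is still unprocessed at step $i$, for otherwise the edge $(u_j,v)\in P_v\subseteq F^{(i)}$ would already have been removed at the earlier step when $u_j$ was processed. The greedy selection rule then gives $|\mathcal{A}_{u_j}(E_{u_j}\cap F^{(i)})|\leq s$, so the monotonicity of $\mathcal{A}_{u_j}$ together with \eqref{ALO2} yields $|J\cap E_{u_j}|\leq\alpha s$ for every $J\in\mathcal{I}[P_v\cup R_v]$; the analogous inequality $|J\cap E_v|\leq\alpha s$ follows directly from \eqref{ALO2} applied to $\mathcal{A}_v$. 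Since every edge in $J\subseteq P_v\cup R_v$ has an endpoint in $\{v,u_1,\dots,u_s\}$, a first crude bound is $|J|\leq (s+1)\alpha s$, giving $|J|/s\leq\alpha(s+1)$. Sharper estimates arise from (i) $s\leq|P_v|\leq n-1$; (ii) the fact that at step $i$, edges from an unprocessed vertex can only go to unprocessed vertices, so $|(E_{u_j}\setminus\{(u_j,v)\})\cap F^{(i)}|\leq n-i-1$; and (iii) subtracting edges of $J$ with both endpoints in $\{v,u_1,\dots,u_s\}$ to avoid double counting. Optimizing the resulting piecewise-linear-in-$s$ upper envelope over admissible $s$ reproduces the three cases of $\rho(\alpha,n)$, with the breakpoints $(\alpha-1)(n-1)=\alpha$ and $(\alpha-1)(n-1)=\alpha(\alpha+1)$ marking transitions between which inequality is tight at the maximizer.

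For the almost-tightness lower bounds I would, in each of the three regimes, construct an explicit instance with a high-degree ``anchor'' vertex whose local independence system and adversarial oracle are tuned so that \algname{Greedy} is forced to return a single short matching while OPT exploits the bulk of the graph; a direct count then produces the claimed $\rho(\alpha,n)-O(\alpha)$ gap. The main obstacle is the piecewise optimization in the second paragraph: to reach the exact constants $(2\alpha-1)/(2\alpha)$ and $\alpha/(\alpha+1)$ one must combine the vertex-incidence bound $\alpha(s+1)$ with the structural degree bound coming from $|R_v^{u_j}|\leq n-i-1$ in precisely the right way, and the lower-bound constructions must be calibrated correspondingly at the two breakpoints to within an additive $O(\alpha)$.
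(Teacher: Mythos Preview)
Your plan is essentially the paper's own approach: verify the local-subpartition structure and independence exactly as for \algname{FixedOrder}, invoke Lemma~\ref{lem:subpart-2}, and for the per-vertex ratio use the greedy rule to bound every degree in $J$ by $\lfloor\alpha s\rfloor$ together with the structural fact that every edge of $P_v\cup R_v$ touches $\{v,u_1,\dots,u_s\}$, then optimize over $s$. The paper carries this out with $\nu=|W|$ (the number of vertices spanned by $P_v\cup R_v$) in place of your $n-i$, but both are ultimately bounded by $n$, and the lower-bound instances are precisely the ``anchor vertex'' constructions you describe ($K_n$ for \eqref{lemeq:upperbounds3}, a tuned almost-complete graph for \eqref{lemeq:upperbounds1}).

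One small correction to your sketch: the upper envelope you must optimize is \emph{not} piecewise linear in $s$. After the double-counting correction the bound on $|J|$ is quadratic in $s$, so $|J|/s$ has both a $-\tfrac{1}{2}s$ term and a $(\nu-1)/s$ term; the paper's four-case analysis (splitting on $\alpha s\ge\nu-1$, $\nu-s-1\le\alpha s<\nu-1$, $\alpha s<\nu-s-1$) handles exactly this, substituting the endpoint values $s=(\nu-1)/\alpha$ and $s=(\nu-1)/(\alpha+1)$ to extract the constants $\tfrac{2\alpha-1}{2\alpha}$ and $\tfrac{\alpha}{\alpha+1}$. Once you make that adjustment your plan goes through verbatim.
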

    \begin{proof}
        Let $v_1, \ldots, v_n$ be the order of vertices chosen in the while-loop of the algorithm.
        For $v \in V$, let $P_v$ and $R_v$ be the edge sets computed in the algorithm, and let $R = \bigcup_{v \in V}R_v$. 
        Then, similarly to Algorithm \algname{FixedOrder}, $\mathcal{P} = \{P_v \mid v \in V\}$ is a local subpartition of $E$ with the residual $R$, and $I$ is independent in $\mathcal{I}$.        
        Since $v$ is chosen from $\arg\max_{w \in W}|\mathcal{A}_w(E_w \cap F)|$ for each while-loop of Algorithm \algname{Greedy}, there is an index $s$ such that $(1)$ $P_{v_i} \not = \emptyset$ for all $i \leq s$ and $(2)$ $P_{v_i},R_{v_i} = \emptyset$ for all $i > s$. For $i \leq s$, define $F_i = P_{v_i} \cup R_{v_i}$. 
        
        Then, by Lemma \ref{lem:subpart-2}, it is enough to show that 
        \begin{align}
            \frac{\max_{J \in \mathcal{I}[F_i]}|J|}{|\mathcal{A}_{v_i}(P_{v_i})|} \leq \rho(\alpha, n)\ \text{for}\ i=1, \dots, s. \label{lemeq:upperratio}
        \end{align}
        Let $t$ be an index that attains the maximum in the left-hand side of (\ref{lemeq:upperratio}), and let $g_t = \max_{J \in \mathcal{I}[F_t]}|J|$.

        Let $H = (W, F_{t})$ be a graph induced by $F_t$, i.e., $W = \{v, u \in V \mid (v, u) \in F_t\}$. Let $\nu = |W|$ and $x = |\mathcal{A}_{v_t}(P_{v_t})|$. Note that
        \begin{align}
            F_{t} \subseteq E_{v_{t}} \cup \bigcup_{u \in W:\ (u, v_{t})\in \mathcal{A}_v(P_{v_t})}E_u, \label{lemeq:edges}
        \end{align}
        \begin{align}
            \max_{J \in \mathcal{I}[F_t \cap E_v]}|J| \leq \lfloor\alpha\, |\mathcal{A}_v(F_t \cap E_v)|\rfloor \leq \lfloor\alpha\, x\rfloor&& \text{for}\ v \in W, \label{lemeq:greedychoice}
        \end{align}
        and $v_t$ is chosen greedily. In order to estimate $g_t$, we consider subgraphs of $H$ with degree at most $\lfloor\alpha x\rfloor$.
        We separately deal with the following four cases.
        \begin{flalign*}
            \hspace{20pt}
            \text{Case 1:}&\ (\alpha -1)(n-1) < \alpha.&\\
            \text{Case 2:}&\ (\alpha -1)(n-1) \geq \alpha\ \text{and}\ \alpha x \geq \nu -1. &\\
            \text{Case 3:}&\ (\alpha -1)(n-1) \geq \alpha\ \text{and}\ \nu-x-1 \leq \alpha x < \nu-1.&\\
            \text{Case 4:}&\ (\alpha -1)(n-1) \geq \alpha\ \text{and}\ \alpha x < \nu-x-1.&
        \end{flalign*}
        Case 1 is to prove \eqref{lemeq:upperbounds3}, while Cases 2,3, and 4 are to prove \eqref{lemeq:upperbounds1} and \eqref{lemeq:upperbounds2}.

        \textbf{Case 1.}
        Note that $(\alpha -1)(n-1) < \alpha$ is equivalent to the condition $\alpha(n-2) < n-1$. Thus if $x < \nu -1\ (\leq n-1)$, then we have $\lfloor \alpha x \rfloor = x$, and  hence it follows from \eqref{lemeq:edges} and \eqref{lemeq:greedychoice} that 
        \begin{align*}
            g_t \leq x(\nu-x-1) + \frac{1}{2}((x+1)x - x(\nu-x-1)) = \frac{\nu x}{2},
        \end{align*}
        which implies that 
        \begin{align*}
            \frac{g_t}{x} \leq \frac{\nu}{2} \leq \frac{n}{2}.
        \end{align*}
        On the other hand, if $x \geq \nu -1$, then we have
        \begin{align*}
            \frac{g_t}{x} \leq \frac{\frac{1}{2}\nu(\nu - 1)}{x} \leq \frac{\nu}{2} \leq \frac{n}{2}.
        \end{align*}
        In either case, (\ref{lemeq:upperbounds3}) is proved.
        
        \medskip
        
        \textbf{Case 2.}
        Since $\alpha x \geq \nu -1$, \eqref{lemeq:edges} implies that
        \begin{align*}
            g_t \leq (x+1)(\nu-x-1) + \frac{1}{2}(x+1)x \leq -\frac{1}{2}x^2 + \left(\nu - \frac{3}{2}\right)x + \nu-1.
        \end{align*}
        Thus,
        \begin{align*}
            \frac{g_t}{x} &\leq -\frac{1}{2}x + \left(\nu - \frac{3}{2}\right) + \frac{\nu-1}{x}\\
            &\leq \alpha + \frac{2\alpha -1}{2\alpha}(\nu-1) -\frac{1}{2}\\
            &\leq \alpha + \frac{2\alpha -1}{2\alpha}(n-1) -\frac{1}{2}, 
        \end{align*}
        where the second inequality is obtained by substituting $x = \frac{\nu -1}{\alpha}$.
        
        \medskip
        
        \textbf{Case 3.}
        Let $L$ be an independent set in $\mathcal{I}[F_t]$. Then we have $|L \cap E_v| \leq \lfloor \alpha x\rfloor$ for any $v \in V$, since $x = \mathcal{A}_{v_t}(P_{v_t})$ and local oracle $\mathcal{A}_v$ is $\alpha$-approximate. Thus if $\alpha x \geq x + 1$, it follows from (\ref{lemeq:edges}) that
        \begin{align*}
            g_t &\leq (x+1)(\nu-x-1) + \frac{1}{2}(x+1)(\lfloor \alpha x\rfloor - (\nu-x-1))\\ 
            &\leq\frac{1}{2}\left\{ (\alpha -1)x^2 + (\alpha + \nu -2)x + \nu -1\right\}, 
        \end{align*}
        which implies that
        \begin{align}
            \frac{g_t}{x} &\leq \frac{1}{2}\left( (\alpha -1)x + \alpha + \nu -2 + \frac{\nu -1}{x}\right).\label{lemeq:gtx22}
        \end{align}
        Since $\frac{\nu -1}{\alpha+1}\leq x \leq \frac{\nu -1}{\alpha}$, the right hand side of (\ref{lemeq:gtx22}) take take maximum at $x = \frac{\nu -1}{\alpha+1}$ or $\frac{\nu -1}{\alpha}$.  Namely, we have
        \begin{align}
            \frac{g_t}{x} \leq \max\left\{
                \alpha + \frac{\alpha}{\alpha+1}(n-1),\
                \alpha + \frac{2\alpha-1}{2\alpha}(n-1) - \frac{1}{2}
            \right\}. \label{lemeq:gtx}
        \end{align}
        On the other hand, if $\alpha x < x + 1$, i.e., $\lfloor \alpha x\rfloor = x$, then we have
        \begin{align*}
            g_t \leq x(\nu-x-1) + \frac{1}{2}((x+1)x - x(\nu-x-1)) = \frac{\nu x}{2}, 
        \end{align*}
        which implies that $\frac{g_t}{x} \leq \frac{\nu}{2} \leq \frac{n}{2}$. Since $\frac{n}{2} \leq \alpha + \frac{2\alpha-1}{2\alpha}(n-1) - \frac{1}{2}$ and $\frac{n}{2} \leq \alpha + \frac{\alpha}{\alpha+1}(n-1)$ for $\alpha \geq 1$, (\ref{lemeq:gtx}) is obtained in Case 3.

        \medskip
        
        \textbf{Case 4.}
        In this case, we have $g_t \leq (x+1)\lfloor \alpha x\rfloor$ and 
        \begin{align*}
            \frac{g_t}{x} \leq \alpha + \alpha x \leq \alpha + \frac{\alpha}{\alpha+1}(n-1),
        \end{align*}
        where the second inequality is obtained from $\alpha + \alpha x$ by substituting $x = \frac{\nu -1}{\alpha+1}$.
        
        \medskip
        
        In summarizing Cases 2, 3, and 4, we obtain inequality (\ref{lemeq:gtx}). Since 
        \begin{align*}
            \alpha + \frac{2\alpha-1}{2\alpha}(n-1) - \frac{1}{2} \geq \alpha + \frac{\alpha}{\alpha+1}(n-1)
        \end{align*}
        if and only if $(\alpha -1)(n-1) \geq \alpha(\alpha + 1)$, (\ref{lemeq:upperbounds1}) and (\ref{lemeq:upperbounds2}) are obtained.
        
        We next show lower bounds of the approximation ratio of the algorithm. 
        Let $(E, \mathcal{I})$ be an independence system on a complete graph $K_n = (V, E)$ with $I = 2^E$, and we assume that $\mathcal{A}_v(E_v) = E_v$ for some $v \in V$. Then Algorithm \algname{Greedy} chooses a vertex $v \in V$ with $\mathcal{A}_{v}(E_{v}) = E_v$ and sets $P_v = E_v$ and $R_v = E \setminus E_v$ at the first iteration of the while-loop. Since $F = \emptyset$ after the first iteration of the while-loop, the algorithm outputs $I = E_v$, and this implies that $\frac{n}{2}$ is a lower bound of the approximation ratio of the algorithm. Note that (\ref{lemeq:upperbounds3}) is tight if $(\alpha -1)(n-1) < \alpha$. Moreover, if $(\alpha-1)(n-1) < \alpha(\alpha + 1)$ holds, we have
        \begin{align*}
            (\ref{lemeq:upperbounds2}) - (\ref{lemeq:upperbounds3}) = \alpha - \frac{1}{2} + \frac{1}{2(\alpha+1)}(\alpha - 1)(n-1) \leq \frac{3}{2}\alpha - \frac{1}{2},
        \end{align*}
        which proves the tightness of \eqref{lemeq:upperbounds2}.
        We consider a lower bound of the algorithm if $(\alpha-1)(n-1) \geq \alpha(\alpha + 1)$ holds.
        Define a graph $G = (V=U \cup W, E)$ by 
        \begin{align*}
            U &= \{v_i \mid i = 1, \dots, \left\lceil \frac{n-1}{\alpha} \right\rceil + 1\}\\
            W &= \{v_i \mid i = \left\lceil \frac{n-1}{\alpha} \right\rceil + 2, \dots, n\}\\
            E_v &= \begin{cases}
                \{(v, u) \mid u \in V \setminus \{v\}\} & \text{if}\ v \in U\\
                \{(v, u) \mid u \in U\} & \text{if}\ v \in W.
            \end{cases}
        \end{align*}
        Let $(E, \mathcal{I})$ be an independence system on $G$ with $E \in \mathcal{I}$, and we assume that every local oracle returns an independent set of size $\left\lceil \frac{n-1}{\alpha} \right\rceil$. Note that $\alpha \left\lceil \frac{n-1}{\alpha} \right\rceil \geq n-1$ and $(\alpha - 1) \left\lceil \frac{n-1}{\alpha} \right\rceil \geq 1$ hold, since $(\alpha-1)(n-1) \geq \alpha(\alpha + 1)$. If Algorithm \algname{Greedy} chooses $v_1 \in U$ with $\mathcal{A}_{v_1}(E_{v_1}) = \{(v_1, u) \mid u \in U \setminus \{v_1\}\}$ at the first iteration of the while-loop, then $R_{v_1} = E \setminus E_{v_1}$ holds and the algorithm outputs $I = \mathcal{A}_{v_1}(E_{v_1})$.
        Similarly to the analysis of an upper bound of Case 2, we have
        \begin{align*}
            \frac{g_1}{\left\lceil \frac{n-1}{\alpha} \right\rceil} &= -\frac{1}{2}\left\lceil \frac{n-1}{\alpha} \right\rceil + \left(n - \frac{3}{2}\right) + \frac{n-1}{\left\lceil \frac{n-1}{\alpha} \right\rceil}\\
            &\geq -\frac{1}{2}\left(\frac{n-1}{\alpha}+ 1\right) + \left(n - \frac{3}{2}\right) + \frac{n-1}{ \frac{n-1}{\alpha} +1}\\
            &= \left(\alpha + \frac{2\alpha-1}{2\alpha}(n-1) - \frac{1}{2}\right) -\left(\frac{1}{2} + \frac{\alpha^2}{n-1 + \alpha}\right)\\
            &\geq \left(\alpha + \frac{2\alpha-1}{2\alpha}(n-1) - \frac{1}{2}\right) -\frac{\alpha}{2},
        \end{align*}
        where the first and second inequalities follow from $\left\lceil \frac{n-1}{\alpha} \right\rceil < \frac{n-1}{\alpha}+ 1$ and $(\alpha-1)(n-1) \geq \alpha(\alpha + 1)$, respectively. This proves the tightness of \eqref{lemeq:upperbounds1}.
    \end{proof}

\section{Approximation algorithms based on local oracles}    
    In this section, we present approximation algorithms for the problem \eqref{problem}.
    Our first algorithm called \algname{OrderedApprox} makes use of a linear order $\prec$ of vertices $V$. Different from \algname{FixedOreder} and \algname{Greedy} in Section 2, the algorithm tries to minimize the size of $R$. In fact, we have $R = \emptyset$ if $G$ is a tree. More precisely, for each $v \in V$, we initialize $P_v = D_v$, where $D_v$ is the set of downward edges of $v$, i.e., $D_v = \{(u, v) \in E_v \mid u \prec v\}$, and compute $\mathcal{A}_v(P_v)$ and $\mathcal{A}_v(P_v \cup \{e\})$ for each $e \in U_v$, where $U_v$ is the set of upward edges of $v$, i.e., $U_v = \{(u, v) \in E_v \mid v \prec u\}$. Based on these outputs of local oracle and their values, we consider four cases, each of which we update $P_v$ and construct $I_v = \mathcal{A}_v(P_v)$ and $R_v$ accordingly. Here $P_v$ and $R = \bigcup_{v \in V} R_v$ correspond to those in Lemma \ref{lem:subpart}. We then modify $P_w$ for vertices $w$ with $v \prec w$, in such a way that the set $I = \mathcal{A}_w(P_w) \cup \bigcup_{u \in V: u \preceq v} I_v$ is an independent set in $\mathcal{I}$.
    The second algorithm first decomposes edge set $E$ into forests $E_1, \ldots ,E_\gamma$, for each forest $E_i$, applies \algname{OrderedApprox} to compute an independent set $I_i$, and chooses a maximum independent set among them.

    We first define the upward and downward edge sets with respect to a linear order.
    \begin{definition}\label{def:order}
        For a graph $G = (V, E)$, let $\prec$ be a linear order of vertices $V$.
        For a vertex $v$, let $U_{v} = \{(v, w) \in E_v \mid v \prec w\}$ and $D_{v} = \{(v, w) \in E_v \mid w \prec v\}$ be the sets of {\em upward} and {\em downward} edges incident to $v$, respectively. We define the {\em width} of $G$ (with respect to a linear order $\prec$) as $\max_{v \in V}|U_{v}|$.
    \end{definition}
    
    The minimum width of the graph $G=(V, E)$ among all linear order of vertices $V$ is called the {\em degeneracy} of $G$, and $G$ is called {\em $k$-degenerate} if $k$ is at least the minimum width of $G$ \cite{lick_white_1970, freuder1982sufficient, matula1983smallest}. Note that a linear order of vertices $V$ certifying that $G$ is $k$-degenerate can be obtained by repeatedly choosing vertices with minimum degree in the remaining graph, i.e.,
    \begin{align*}
        v_i \in \arg\min \{\deg_{G[V \setminus \{v_1, \dots, v_{i-1}\}]}(v) \mid v \in V \setminus \{v_1, \dots, v_{i-1}\}\} && \text{for}\ i = 1, \dots, n,
    \end{align*}
    where $\deg_H(v)$ denotes the degree of vertex $v$ in the graph $H$ and $G[W]$ denotes the subgraph of $G$ induced by a vertex subset $W$.
    Therefore, such a liner order can be computed in linear time.
    It is also known that the degeneracy of a graph is at most the tree-width of it.
    \begin{algorithm}
        \caption{\rm{\algname{OrderedApprox}}$(G, \mathcal{I}, \prec)$}
        \label{alg:ordered}
        \begin{algorithmic}[1]
            \Require An independence system $(E, \mathcal{I})$ defined on a graph $G = (V, E)$ and a linear order $v_1 \prec v_2 \prec \ldots \prec v_n$ of vertices $V$.
            \Ensure An independent set in $\mathcal{I}$.
            \vspace{1pt}
            \State $X := \emptyset$.
            \State $P_v := D_v$ for $v \in V$.
            \For{$i = 1, \ldots, n$}\label{algline:firstfor}
            	\State $B_{v_i} := \{e \in U_{v_i} \mid e \in \mathcal{A}_{v_i}(P_{v_i} \cup \{e\}),\ |\mathcal{A}_{v_i}(P_{v_i} \cup \{e\})| > |\mathcal{A}_{v_i}(P_{v_i})|\}$.
            	\If{$U_{v_i}= \emptyset$}
    				\State $I_{v_i} := \mathcal{A}_{v_i}(P_{v_i})$. \label{algline:update_I-1}
    				\State $R_{v_i}:=\emptyset$. \label{algline:constructR-1}
            	\ElsIf{$U_{v_i} \not = \emptyset$, $P_{v_i} \not = \emptyset$, and $B_{v_i} = \emptyset$}
				    \State Choose an edge $e=(v_i, v_j) \in U_{v_i}$ arbitrarily.
				    \If{$e \not \in \mathcal{A}_{v_i}(P_{v_i} \cup\{e\})$}
				        \State $I_{v_i} := \mathcal{A}_{v_i}(P_{v_i} \cup \{e\})$. \label{algline:update_I-2a}
				    \Else{ (i.e., $|\mathcal{A}_{v_i}(P_{v_i} \cup \{e\})| = |\mathcal{A}_{v_i}(P_{v_i})|$)}
				        \State $I_{v_i} := \mathcal{A}_{v_i}(P_{v_i})$.\Comment{16.5em}{$|I_{v_i}| = |\mathcal{A}_{v_i}(P_{v_i}\cup \{e\})|$} \label{algline:update_I-2b}
				    \EndIf
    				\State $P_{v_i} := P_{v_i} \cup \{e\}$.\label{algline:update_P-1}
    				\State $R_{v_i} := U_{v_i} \setminus \{e\}$. \label{algline:constructR-2}
    				\State $P_{v_j} := P_{v_j} \setminus \{e\}$.\label{algline:update_P-2}
    			\ElsIf{$U_{v_i} \not = \emptyset$, $P_{v_i} \not = \emptyset$, and $B_{v_i} \not = \emptyset$}
				    \State Choose an edge $b \in B_{v_i}$ arbitrarily.
					\State $I_{v_i} := \mathcal{A}_{v_i}(P_{v_i} \cup \{b\}) \setminus \{b\}$. \Comment{16.5em}{$I_{v_i} \subseteq P_{v_i}$ and $|I_{v_i}| = |\mathcal{A}_{v_i}(P_{v_i})|$} \label{algline:update_I-3}
					\State $R_{v_i} := U_{v_i} \setminus \{b\}$.\label{algline:constructR-3}
    			\Else{ (i.e., $U_{v_i} \not = \emptyset$ and $P_{v_i} = \emptyset$)}
    				\State $I_{v_i} := \emptyset$. \label{algline:update_I-4}
    				\State $R_{v_i}:=\emptyset$. \label{algline:constructR-4} \Comment{16.5em}{$R_{v_i}$ might be updated in line \ref{algline:update_R}}
    				\State $X := X \cup \{v_i\}$.
                \EndIf
				\For{$(v_l, v_i) \in I_{v_i}$}
				    \If{$v_l \in X$}
				        \State $R_{v_l} := \{(v_l, v_j) \in U_{v_l} \mid j>i\}$.\label{algline:update_R}
				        \State $X := X \setminus \{v_l\}$.
				    \EndIf
				\EndFor
				\algstore{break1}
		\end{algorithmic}
    \end{algorithm}
    \begin{algorithm}
        \begin{algorithmic}[1]
				\algrestore{break1}
				\For{$j = i+1, \ldots, n$}
				\State $P_{v_j} := P_{v_j} \setminus (\bigcup_{l \leq i} R_{v_l})$.\label{algline:update_P-3}
				\EndFor
            \EndFor
            \State Output $I= \bigcup_{v \in V} I_v$ and halt.
        \end{algorithmic}
    \end{algorithm}
    
    Algorithm \algname{OrderedApprox} first initializes $P_v = D_v$ for all $v \in V$ and for each $i$-th iteration of the for-loop, computes an edge set $B_{v_i} \subseteq U_{v_i}$ by
    \begin{align*}
        B_{v_i} := \{e \in U_{v_i} \mid e \in \mathcal{A}_{v_i}(P_{v_i} \cup \{e\}),\ |\mathcal{A}_{v_i}(P_{v_i} \cup \{e\})| > |\mathcal{A}_{v_i}(P_{v_i})|\}.
    \end{align*} 
    It separately treats the following four cases as in the description of the algorithm.
    \begin{flalign*}
        \hspace{20pt}\text{Case 1:}&\ U_{v_i} = \emptyset &\\
        \text{Case 2:}&\ U_{v_i} \not= \emptyset,\ P_{v_i} \not= \emptyset,\ \text{and}\ B_{v_i} = \emptyset&\\
        \text{Case 3:}&\ U_{v_i} \not= \emptyset,\ P_{v_i} \not= \emptyset,\ \text{and}\ B_{v_i} \not= \emptyset&\\
        \text{Case 4:}&\ U_{v_i} \not= \emptyset\ \text{and}\ P_{v_i} = \emptyset&
    \end{flalign*}
    For all the cases, we show the following two lemmas.
    \begin{lemma}\label{lem:conditionP}
    Algorithm \algname{OrderedApprox} satisfies the following three conditions
    \begin{flalign*}
        \hspace{10pt} \rm{({\romannumeral 1})}&\ I_{v_j} \subseteq P_{v_j} \subseteq E_{v_j}\ \text{with}\ |I_{v_j}| = |\mathcal{A}_{v_j}(P_{v_j})|\ \text{for all }\ j \leq i,&\\
        \rm{({\romannumeral 2})}&\ R_{v_j} \subseteq U_{v_j}\ \text{for all }\ j \leq i,\ \text{and}&\\
        \rm{({\romannumeral 3})}&\ \mathcal{P} = \{P_v \mid v \in V\}\ \text{is a local subpartition of}\ E\ \text{with residual}\ R=\bigcup_{j \leq i}R_{v_j}&
    \end{flalign*}
    at the end of the $i$-th iteration of the for-loop.
    \end{lemma}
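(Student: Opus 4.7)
The plan is to prove Lemma \ref{lem:conditionP} by induction on $i$. For the base case ($i = 0$, the state just after the initialization $P_v := D_v$ and before the for-loop begins), every edge $e = (u,v)$ with $u \prec v$ belongs to $D_v$ and to no other $D_w$, so $\{D_v\}_{v \in V}$ is in fact a partition of $E$; no $I_{v_j}$ or $R_{v_j}$ has been created yet, so (i) and (ii) are vacuous and (iii) holds with empty residual.

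For the inductive step I would assume the three conditions hold at the end of iteration $i-1$ and verify them at the end of iteration $i$ by case analysis on Cases 1--4. Conditions (i) and (ii) follow by inspection: in Case 1 the set $P_{v_i}$ is unchanged, $I_{v_i} = \mathcal{A}_{v_i}(P_{v_i})$, and $R_{v_i} = \emptyset$; in Case 2 the new $P_{v_i}$ is $P_{v_i} \cup \{e\}$, and in the Else branch at line \ref{algline:update_I-2b} the hypothesis $B_{v_i} = \emptyset$ together with $e \in \mathcal{A}_{v_i}(P_{v_i} \cup \{e\})$ forces $|\mathcal{A}_{v_i}(P_{v_i} \cup \{e\})| \leq |\mathcal{A}_{v_i}(P_{v_i})|$ (otherwise $e$ would lie in $B_{v_i}$), while monotonicity of $\mathcal{A}_{v_i}$ gives the reverse inequality, so $|I_{v_i}| = |\mathcal{A}_{v_i}(P_{v_i} \cup \{e\})|$ as required; Case 3 sets $I_{v_i} = \mathcal{A}_{v_i}(P_{v_i} \cup \{b\}) \setminus \{b\} \subseteq P_{v_i}$, and Case 4 sets $I_{v_i} = \emptyset$. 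The sets $R_{v_i}$ defined at lines \ref{algline:constructR-1}--\ref{algline:constructR-4}, and any $R_{v_l}$ assigned by the back-update at line \ref{algline:update_R}, are by construction subsets of the corresponding upward edge sets, which gives (ii).

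The crux is condition (iii). Disjointness of $\{P_v\}$ is preserved because the only $P$-modifications in iteration $i$ are the atomic swap $P_{v_i} \gets P_{v_i} \cup \{e\}$, $P_{v_j} \gets P_{v_j} \setminus \{e\}$ at lines \ref{algline:update_P-1}, \ref{algline:update_P-2} of Case 2 and the purely removing update at line \ref{algline:update_P-3}, neither of which creates overlaps. To prove $\bigcup_{j \leq i} R_{v_j}$ is exactly the residual $E \setminus \bigcup_v P_v$, I would track each edge $e = (v_l, v_k)$ with $v_l \prec v_k$: initially $e \in P_{v_k}$ alone, and $e$ leaves $P_{v_k}$ only through (a) the Case 2 swap at iteration $l$ that moves $e$ into $P_{v_l}$, (b) the trailing for-loop after $e$ has been deposited into some $R_{v_{l'}}$ by Case 2 or Case 3 of an earlier iteration $l' \leq l$, or (c) the trailing for-loop following the back-update at line \ref{algline:update_R} that places $e$ into $R_{v_l}$ when $v_l$ finally leaves $X$.

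The main obstacle will be handling case (c): while $v_l$ sits in $X$ (so $v_l$ was a Case 4 vertex with $P_{v_l} = \emptyset$ and $R_{v_l} = \emptyset$), its upward edges linger in the initial $P_{v_m}$'s, and when $v_l$ is removed from $X$ at some iteration $k'$, the single assignment $R_{v_l} := \{(v_l, v_j) \in U_{v_l} \mid j > k'\}$ must correctly separate $U_{v_l}$ into the edges newly declared residual and the edges $(v_l, v_j)$ with $l < j \leq k'$ that are already accounted for elsewhere. I would pin this down with the auxiliary invariant that, as long as $v_l \in X$, no edge of $U_{v_l}$ has been placed into any $R_{v_{l''}}$ and every edge of $U_{v_l}$ still lying in some $P$ sits in its original $P_{v_m}$; the Case 1--3 analyses applied to the intermediate vertices $v_{l+1}, \ldots, v_{k'-1}$ (whose $P$-updates never touch edges of $U_{v_l}$) then preserve this invariant, so the back-update combined with the trailing for-loop at iteration $k'$ produces exactly the subpartition asserted in (iii).
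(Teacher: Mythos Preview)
Your proposal is correct and follows essentially the same inductive route as the paper: verify (i) and (ii) by inspecting the lines where $I_{v_i}$, $P_{v_i}$, $R_{v_i}$ are assigned in each of Cases~1--4, and prove (iii) by observing that the only $P$-modifications are the atomic swap in Case~2 and the deletions in line~\ref{algline:update_P-3}, so $\{P_v\}$ stays a subpartition and an edge leaves $\bigcup_v P_v$ only after being deposited in some $R_{v_l}$.

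The one place where you are more explicit than the paper is the Case~4 back-update. The paper covers it with the blanket remark that ``when $R_v$ is constructed \ldots\ all the edges in such an $R_v$ are deleted from the corresponding $P_{v_j}$,'' which silently handles line~\ref{algline:update_R} together with the trailing loop. Your auxiliary invariant (while $v_l\in X$, every edge of $U_{v_l}$ still sits in its original $P_{v_m}$ and lies in no $R_{v_{l''}}$) makes this step transparent and shows why the split $\{j>k'\}$ versus $\{l<j\le k'\}$ in line~\ref{algline:update_R} is exactly right. Two small remarks: first, the intermediate vertices $v_{l+1},\dots,v_{k'-1}$ need not all fall into Cases~1--3 (some may themselves be Case~4), but this is harmless since Case~4 iterations also never touch edges of $U_{v_l}$; second, in Case~3 one only gets $|I_{v_i}|\ge |\mathcal{A}_{v_i}(P_{v_i})|$ directly from $b\in B_{v_i}$ (equality would require $|\mathcal{A}_{v_i}(P_{v_i}\cup\{b\})|=|\mathcal{A}_{v_i}(P_{v_i})|+1$), and the paper's own proof glosses over this in the same way---the inequality is all that is used downstream in Lemma~\ref{lem:subpart}.
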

    \begin{proof}
        Since $I_{v_j}$ and $P_{v_j}$ are never modified after the $j$-th iteration of the for-loop in Line \ref{algline:firstfor}, it is enough to show that $I_{v_i} \subseteq P_{v_i} \subseteq E_{v_i}$ and $|I_{v_i}| = |\mathcal{A}_{v_i}(P_{v_i})|$ at the end of the $i$-th iteration of the for-loop to prove Condition \rm{({\romannumeral 1})}. We can see that $P_{v_i}$ is initialized to $D_{v_i}$, and $P_{v_i}$ is modified in Lines \ref{algline:update_P-1}, \ref{algline:update_P-2}, and \ref{algline:update_P-3}. In Lines \ref{algline:update_P-2} and \ref{algline:update_P-3}, no edge is added to $P_{v_i}$, and in Line \ref{algline:update_P-1}, some edge $e \in U_{v_i}$ is added to $P_{v_i}$.
        These show that $P_{v_i} \subseteq E_{v_i}$. Moreover, since $I_{v_i}$ is constructed in Lines \ref{algline:update_I-1}, \ref{algline:update_I-2a}, \ref{algline:update_I-2b}, \ref{algline:update_I-3}, and \ref{algline:update_I-4}, we have $I_{v_i} \subseteq P_{v_i}$ and $|I_{v_i}| = |\mathcal{A}_{v_i}(P_{v_i})|$, which implies Condition \rm{({\romannumeral 1})}.
        
        Since $R_{v_j}$ is never modified after the $j$-th iteration for all the cases except for Case 4, Condition \rm{({\romannumeral 2})} is satisfied if Case 4 is not satisfied in the $j$-th iteration of the for-loop. On the other hand, if Case 4 is satisfied in the $j$-th iteration of the for-loop, then $R_{v_j}$ might be updated in Line \ref{algline:update_R}, which again satisfies $R_{v_j} \subseteq U_{v_j}$.        
        This implies Condition \rm{({\romannumeral 2})}.
        
        Let us finally show Condition \rm{({\romannumeral 3})}. Before the first iteration of the for-loop, $\mathcal{P} = \{P_v = D_v \mid v \in V\}$ is a local subpartition of $E$ with the residual $R = \emptyset$. Assuming that Condition \rm{({\romannumeral 3})} is satisfied in the beginning of the $i$-th iteration, we show that Condition \rm{({\romannumeral 3})} is satisfied at the end of the $i$-th iteration. Note that $P_{v_i}$ is modified in Lines \ref{algline:update_P-1}, \ref{algline:update_P-2}, and \ref{algline:update_P-3}. In Lines \ref{algline:update_P-2} and \ref{algline:update_P-3}, no edge is added to $P_{v_j}$ for any $j$. If some edge $e=(v_i, v_j) \in U_{v_i}$ is added to $P_{v_i}$ in Line \ref{algline:update_P-1}, $e$ is removed from $P_{v_j}$. These imply that $\mathcal{P} = \{P_v \mid v\in V\}$ is a subpartition of $E$. Moreover, 
        in any iteration, no edge is added to $\bigcup_{P_{v} \in \mathcal{P}} P_v$, and when $R_v$ is constructed in Lines \ref{algline:constructR-1}, \ref{algline:constructR-2}, \ref{algline:constructR-3}, \ref{algline:constructR-4}, and \ref{algline:update_R}, all the edges in such an $R_v$ are deleted from the corresponding $P_{v_j}$ in Line \ref{algline:update_P-3}. Thus Condition \rm{({\romannumeral 3})} is satisfied at the end of the $i$-th iteration, which completes the proof of the lemma.
    \end{proof}
    \begin{lemma}\label{lem:independence}
    Algorithm \algname{OrderedApprox} satisfies that
    \begin{align}
        I \cup \bigcup_{j < i}I_{v_j} \in \mathcal{I}\ \text{for any independent set}\ I \in \mathcal{I}[P_{v_i}] \label{lemeq:independence}
    \end{align}
    in the beginning of the $i$-th iteration of the for-loop.
    \end{lemma}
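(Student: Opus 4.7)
The proof will go by induction on $i$. The base case $i=1$ is vacuous because $P_{v_1} = D_{v_1} = \emptyset$, so the only candidate is $I = \emptyset$. For the inductive step I fix an independent set $I' \in \mathcal{I}[P_{v_{i+1}}]$ at the beginning of iteration $i+1$ and verify that $(I' \cup \bigcup_{j \leq i} I_{v_j}) \cap E_{v_l}$ lies in $\mathcal{I}_{v_l}$ for every vertex $v_l$; by the definition of $\mathcal{I}$ in \eqref{locallyIS}, this will give the conclusion.

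The easy range is $l \geq i+1$. The key observation is that an edge $(v_l, v_j) \in I_{v_j}$ with $j < l$ is an upward edge of $v_j$, and the only branch in which an upward edge can possibly lie in $I_{v_j}$ is Case 2 at iteration $j$ with that edge playing the role of the reserved $e$; however the Case 2 construction explicitly sets $I_{v_j}$ so that $e \notin I_{v_j}$. Combined with $I' \subseteq P_{v_{i+1}} \subseteq D_{v_{i+1}}$, this shows $(I' \cup \bigcup_{j \leq i} I_{v_j}) \cap E_{v_l}$ is empty when $l > i+1$ and equals $I'$ when $l = i+1$, both of which lie in $\mathcal{I}_{v_l}$.

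The substantive range is $l \leq i$. The contributions to $E_{v_l}$ beyond $I_{v_l}$ are the external edges $(v_l, v_m)$ with $m > l$ appearing in $I_{v_m}$ (for $l < m \leq i$) or in $I'$ (for $m = i+1$), and all of these lie in $U_{v_l}$. I case-split on the branch executed at iteration $l$. In Case 1, $U_{v_l} = \emptyset$ so there is nothing to check. In Case 2, the chosen $e$ is moved into $P_{v_l}$ via Line \ref{algline:update_P-1} and is therefore absent from every other $P_{v_m}$ by disjointness of the local subpartition, while the remaining members of $U_{v_l}$ are placed in $R_{v_l}$ and thus purged from every later $P_{v_m}$ by Line \ref{algline:update_P-3}; so no external edge survives and the intersection equals $I_{v_l}$. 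In Case 3 the sole upward edge not in $R_{v_l}$ is the reserved $b$, which remains in $P_{v_k}$ where $b = (v_l, v_k)$, giving the intersection either $I_{v_l}$ or $I_{v_l} \cup \{b\} = \mathcal{A}_{v_l}(P_{v_l} \cup \{b\}) \in \mathcal{I}_{v_l}$ by the definition of $B_{v_l}$. In Case 4, $I_{v_l} = \emptyset$: if $v_l$ has already been reclaimed at some iteration $i'' \leq i$, then Line \ref{algline:update_R} placed every upward edge $(v_l, v_m)$ with $m > i''$ into $R_{v_l}$, which Line \ref{algline:update_P-3} then purges from $P_{v_m}$, so the only possible external edge is $(v_l, v_{i''}) \in I_{v_{i''}}$; if $v_l$ is still in $X$, then no earlier iteration can have selected an edge at $v_l$ without triggering a reclaim, so the only possible external edge is $(v_l, v_{i+1})$ coming from $I'$. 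In either sub-case the intersection is empty or a singleton, which belongs to $\mathcal{I}_{v_l}$ by the paper's standing assumption that $\{e\} \in \mathcal{I}$ for every edge $e$.

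The main obstacle will be the bookkeeping of Cases 3 and 4: one has to track the interplay between $P_v$, $R_v$, and the deferred-reclaim set $X$ to pin down that at most one external upward edge of $v_l$ can survive into some later $I_{v_m}$ or into $I'$. Once this single-external-edge invariant is established, each local independence check reduces to either a direct oracle output, a direct application of the definition of $B_{v_l}$, or the singleton hypothesis.
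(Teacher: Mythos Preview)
Your argument is correct and follows the same plan as the paper's proof: for every vertex $v_l$, show that at most one upward edge of $v_l$ can survive into the union $I' \cup \bigcup_{j\le i} I_{v_j}$, and that this single surviving edge combined with $I_{v_l}$ still lies in $\mathcal{I}_{v_l}$. The paper compresses this into the two observations that $I_{v_j}\cup\{e\}\in\mathcal{I}_{v_j}$ for every $e\in U_{v_j}\setminus(P_{v_j}\cup R_{v_j})$ and that this set has size at most one (with the Case~4 reclaim handled separately), whereas you spell out each of the four cases and also treat the range $l\ge i+1$ explicitly; but the content is identical. One cosmetic remark: you announce an induction on $i$ yet never invoke the inductive hypothesis---your argument is in fact a direct verification, just as the paper's is, so you may drop the induction framing.
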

    \begin{proof}
        For an index $j$, consider the end of the $j$-th iteration of the for-loop in Line \ref{algline:firstfor}. We have $I_{v_j} \cup \{e\} \in \mathcal{I}_{v_j}$ for any $e \in U_{v_j} \setminus (P_{v_j} \cup R_{v_j})$. If the $j$-th iteration of the for-loop falls into Cases 1, 2, or 3, then $U_{v_j} \cap (\bigcup_{k > j} P_{v_k}) = U_{v_j} \setminus (P_{v_j} \cup R_{v_j})$ contains at most one edge. Otherwise (i.e., Case 4), we have $U_ {v_j} \subseteq \bigcup_{k > j} P_{v_k}$, and $R_{v_j}$ will be updated to $\{(v_j, v_l) \in U_{v_j} \mid l > k\}$ once $(v_j, v_k) \in U_{v_j}$ is chosen by $I_{v_k}$ in the $k$-th iteration for some $k > j$. Therefore, (\ref{lemeq:independence}) is satisfied in the beginning of the $i$-th iteration of the for-loop.
    \end{proof}
    \begin{theorem}\label{thm:orderedapprox}
        Algorithm \algname{OrderedApprox} computes an $(\alpha + 2\gamma - 2)$-approximate independent set $I$ in $\mathcal{I}$ in polynomial time, where $\gamma$ is the width of a given graph $G$ with respect to a given linear order of vertices $V$.
    \end{theorem}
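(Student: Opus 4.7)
The plan is to reduce the claim to Lemma \ref{lem:subpart}, applied to the local subpartition $\mathcal{P}=\{P_v\mid v\in V\}$ and residual $R=\bigcup_{v\in V}R_v$ produced by \algname{OrderedApprox} at termination. Lemma \ref{lem:conditionP} already guarantees that $\mathcal{P}$ is a local subpartition with residual $R$ and that $|I_v|\geq|\mathcal{A}_v(P_v)|$ for all $v$, and Lemma \ref{lem:independence} (applied after the final for-loop iteration) gives that $I=\bigcup_v I_v\in\mathcal{I}$. Re-reading the proof of Lemma \ref{lem:subpart} with the algorithm's $I$ (which differs from $\bigcup_v\mathcal{A}_v(P_v)$ only in Case 3, and only by being at least as large) then yields approximation ratio $\alpha+\max_{J\in\mathcal{I}[R]}|J|/|I|$, so it suffices to prove $|R|\leq(2\gamma-2)|I|$.

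I would establish this via a two-part covering argument. First, whenever $R_v\neq\emptyset$ we have $|R_v|\leq\gamma-1$: in Cases 2 and 3, $R_v$ equals $U_v\setminus\{e\}$ or $U_v\setminus\{b\}$, and in the Case 4 update in Line \ref{algline:update_R}, $R_v\subseteq U_v\setminus\{(v_l,v_i)\}$; in each case at least one edge of $U_v$ is excluded, and $|U_v|\leq\gamma$. Second, whenever $R_v\neq\emptyset$, some edge of $I$ is incident to $v$: in Cases 2 and 3, the standing assumption $\{e\}\in\mathcal{I}$ combined with $P_v\neq\emptyset$ forces $|\mathcal{A}_v(P_v)|\geq 1$, and for Case 3 the strict inequality $|\mathcal{A}_v(P_v\cup\{b\})|>|\mathcal{A}_v(P_v)|\geq 1$ gives $|I_v|=|\mathcal{A}_v(P_v\cup\{b\})\setminus\{b\}|\geq 1$; in the Case 4 update, $R_v$ becomes nonempty only when some edge $(v,v_i)$ enters $I_{v_i}\subseteq I$. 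Fixing for each such $v$ one witness edge of $I$ incident to $v$ yields a map from $\{v:R_v\neq\emptyset\}$ into $I$ whose fibers have size at most $2$, so $|\{v:R_v\neq\emptyset\}|\leq 2|I|$ and hence $|R|\leq(\gamma-1)\cdot 2|I|=(2\gamma-2)|I|$. Polynomial running time is immediate: each for-loop iteration performs at most $|U_{v_i}|+1$ oracle calls plus linear-time bookkeeping, totaling $O(|E|)$ oracle calls.

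I expect the main obstacle to be the second (coverage) component of the argument, and specifically the Case 4 branch, where $v$ initially contributes nothing to $I$ and $R_v$ is populated only later, via the Line \ref{algline:update_R} update. Keeping track of the correspondence between $R_v$ becoming nonempty in that branch and the specific edge $(v,v_i)\in I_{v_i}$ that triggered the update is what couples the $R$-count to $|I|$ and drives the factor of $2$; without this pairing, the naive bound $|R|\leq\gamma|V|$ would be far too weak.
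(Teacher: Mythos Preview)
Your proposal is correct and follows essentially the same approach as the paper: both invoke Lemmas~\ref{lem:conditionP} and~\ref{lem:independence} to reduce to Lemma~\ref{lem:subpart}, and both establish $|R|\le(2\gamma-2)|I|$ by pairing each vertex $v$ with $R_v\neq\emptyset$ to an incident edge of $I$. The paper organises this pairing by the case split $V_1,\,V_2\cup V_3,\,V_4$ (bounding the $V_2\cup V_3$ contribution by $(\gamma-1)\sum_v|I_v|$ and the $V_4$ contribution by an injective map into $I$), whereas you package it as a single map $\{v:R_v\neq\emptyset\}\to I$ with fibres of size at most~$2$; the content is the same. One small slip: $I_v$ may also differ from $\mathcal{A}_v(P_v)$ in Case~2b (not only Case~3), but Lemma~\ref{lem:conditionP}\,{\rm(i)} already gives $|I_v|=|\mathcal{A}_v(P_v)|$ in every case, so your rereading of Lemma~\ref{lem:subpart} goes through unchanged.
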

    \begin{proof}
        For any $v$ in $V$, let $I_v$, $P_v$, and $R_v$ be the sets obtained by Algorithm \algname{OrderedApprox}. By Lemma \ref{lem:independence}, $I = \bigcup_{v \in V} I_v$ is an independent set in $\mathcal{I}$. Let us first consider the size of $R_{v_i}$. If the $i$-th iteration of the for-loop falls into Case 1, then we have $R_{v_i} = \emptyset$. If it falls into Cases 2 or 3, then we have $|R_{v_i}| = |U_{v_i}| - 1 \leq \gamma - 1$ and $I_{v_i} \not = \emptyset$, which implies $|R_{v_i}| \leq (\gamma - 1)|I_{v_i}|$. In Case 4, $v_i$ is added to $X$ and $R_{v_i}$ is set to $R_{v_i} = \emptyset$ at the end of the $i$-th iteration. 
        Note that $R_{v_i}$ might be updated to $\{(v_i, v_k) \in U_{v_k} \mid k > j\}$ if $(v_i, v_j) \in U_{v_i}$ is chosen by $I_{v_j}$ in the $j$-th iteration for some $j > i$. In either case, we have $|R_{v_i}| \leq \gamma - 1$, and there exists an edge $(v_i, v_j)$ in $U_{v_i} \cap I_{v_j}$ if $R_{v_i} \not = \emptyset$. For $p=1,2,3$, and $4$, let $V_p$ denote the set of vertices $v_i$ such that the $i$-th iteration of the for-loop falls into Case $p$. Then we have 
        \begin{align*}
        \sum_{v \in V_1} |R_v| &= 0,\\
        \sum_{v \in V_2 \cup V_3} |R_v| &\leq (\gamma - 1)\sum_{v \in V_2 \cup V_3} |I_v| \leq (\gamma - 1)|I|,\ \text{and}\\
        \sum_{v \in V_4} |R_v| &\leq (\gamma - 1)|I|.
        \end{align*}
         Therefore, we obtain the following inequality 
        \begin{align*}
            |R| = \sum_{v \in V} |R_v| \leq 2(\gamma - 1)|I|.
        \end{align*}
        By Lemma \ref{lem:conditionP}, $\mathcal{P} = \{P_v \mid v \in V\}$ is a local subpartition of $E$ with the residual $R=\bigcup_{v \in V}R_{v}$. Thus, by applying Lemma \ref{lem:subpart} to this $I$, we can see that $I$ is an $(\alpha + 2\gamma -2)$-approximate independent set in $\mathcal{I}$.
    \end{proof}

    Since a linear order $\prec$ of vertices $V$ representing the degeneracy of $G=(V, E)$ can be computed in linear time, we have the following corollary.
    \begin{corollary}
        Algorithm \algname{OrderedApprox} computes an $(\alpha + 2k - 2)$-approximate independent set $I$ in $\mathcal{I}$ in polynomial time, if a given graph $G$ is $k$-degenerate.
    \end{corollary}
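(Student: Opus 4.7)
The plan is to derive the corollary as an immediate consequence of Theorem~\ref{thm:orderedapprox}, by invoking the characterization of $k$-degeneracy through the existence of a linear vertex order of width at most $k$. Concretely, I would proceed as follows.

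First, I would recall from the discussion preceding Definition~\ref{def:order} that a graph $G=(V,E)$ is $k$-degenerate if and only if there exists a linear order $\prec$ of $V$ whose width, defined as $\max_{v \in V}|U_v|$, is at most $k$. I would also recall that such an order can be obtained by the standard \emph{smallest-last} procedure: iteratively select a vertex of minimum degree in the remaining induced subgraph, i.e.,
\begin{align*}
v_i \in \arg\min\{\deg_{G[V \setminus \{v_1,\dots,v_{i-1}\}]}(v) \mid v \in V \setminus \{v_1,\dots,v_{i-1}\}\}
\end{align*}
for $i = 1,\dots,n$, and reverse the resulting sequence so that each $v_i$ has at most $k$ neighbors later in the order. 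The paper already notes that this computation takes linear time (for example, using bucket data structures on vertex degrees).

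Second, I would feed this order $\prec$ together with the given independence system $(E,\mathcal{I})$ on the $k$-degenerate graph $G$ into Algorithm \algname{OrderedApprox}. By construction, the width $\gamma$ of $G$ with respect to $\prec$ satisfies $\gamma \leq k$. Applying Theorem~\ref{thm:orderedapprox} directly yields an $(\alpha + 2\gamma - 2)$-approximate independent set $I \in \mathcal{I}$, and since $\gamma \leq k$ we obtain an $(\alpha + 2k - 2)$-approximate independent set.

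Third, I would address the running time. Computing the order $\prec$ takes linear time by the above, and Algorithm \algname{OrderedApprox} performs $O(n)$ iterations of the outer for-loop, each invoking the local oracle $\mathcal{A}_{v_i}$ a polynomial number of times (once on $P_{v_i}$, and once on $P_{v_i} \cup \{e\}$ for each $e \in U_{v_i}$), together with bookkeeping updates of the sets $P_{v_j}$ and $R_{v_j}$. Hence the total running time is polynomial. Since the main technical content is carried entirely by Theorem~\ref{thm:orderedapprox}, there is no real obstacle here; the only point to verify carefully is that the linear order produced by the smallest-last rule indeed certifies width at most $k$ when $G$ is $k$-degenerate, which is a standard fact.
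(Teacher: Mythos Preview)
Your proposal is correct and matches the paper's own justification, which simply notes that a linear order certifying width at most $k$ is computable in linear time and then invokes Theorem~\ref{thm:orderedapprox}. One small slip: with the indexing you wrote, no reversal is needed, since $v_i$ already has at most $k$ neighbors in $G[\{v_i,\dots,v_n\}]$ and hence $|U_{v_i}|\le k$.
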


    For a graph $G=(V, E)$ with the width $\gamma$, the second algorithm, called \algname{DecomApprox}, first decomposes edge set $E$ into forests $E_1, \ldots ,E_\gamma$, for each $E_i$, applies \algname{OrderedApprox} to compute an independent set $I_i$, and chooses a maximum independent set among them.
    \begin{algorithm}[ht]
        \caption{\algname{DecomApprox}$(G, \mathcal{I}, \prec)$}
        \label{alg:repaet}
        \begin{algorithmic}
            \Require An independence system $(E, \mathcal{I})$ defined on a graph $G = (V, E)$ and a linear order $v_1 \prec v_2 \prec \dots \prec v_n$ of vertices $V$, where $\gamma$ is the width of graph $G$ with respect to $\prec$.
            \Ensure An independent set in $\mathcal{I}$.
            \vspace{5pt}
            \For{$i = 1, \dots, \gamma$}
                \State $E_i := \emptyset$.
            \EndFor
            \For{each $v$ in $V$}
                \For{each $e_i$ in $U_v = \{e_1, e_2, \dots, e_{|U_v|}\}$}
                    \State $E_i := E_i \cup \{e_i\}$.
                \EndFor
            \EndFor
            \For{$i = 1, \dots, \gamma$}
                \State $I_i = \algname{OrderedApprox}(G[E_i],\mathcal{I}[E_i], \prec)$.
            \EndFor
            \State $I \in \arg \max \{|I_i| \mid i = 1, \dots,\gamma\}$.
            \State Output $I$ and halt.
        \end{algorithmic}
    \end{algorithm}
    
    Note that $G_i = (V, E_i)$ is a $1$-degenerate for any $i = 1, \dots, \gamma$. Thus we have the following theorem.
    
    \begin{theorem}\label{thm:DecomApprox}
        Algorithm \algname{DecomApprox} computes an $\alpha \gamma$-approximate independent set $I$ in $\mathcal{I}$ in polynomial time, where $\gamma$ is the width of a given graph $G$ with respect to a given linear order of vertices $V$.
    \end{theorem}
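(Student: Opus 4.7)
The plan is to reduce the analysis of \algname{DecomApprox} to $\gamma$ separate applications of \algname{OrderedApprox} on $1$-degenerate (in fact, width-$1$) subgraphs, and then relate the global optimum to the sum of the local optima via the downward closedness of $\mathcal{I}$.

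First I would establish the structural observation that each $G_i = (V, E_i)$ has width at most $1$ with respect to the input linear order $\prec$. The decomposition loop visits each $v \in V$ and places the $k$-th edge $e_k \in U_v$ into $E_k$, for $k = 1, \dots, |U_v|$. Since the width of $G$ is $\gamma$, we have $|U_v| \leq \gamma$ for every $v$, so each $E_i$ receives at most one upward edge of $v$. Hence with respect to $\prec$, every vertex has at most one upward edge in $G_i$, i.e., $G_i$ has width $1$, so by Theorem \ref{thm:orderedapprox}, $I_i := \algname{OrderedApprox}(G[E_i], \mathcal{I}[E_i], \prec)$ is an $\alpha$-approximate independent set of $(E_i, \mathcal{I}[E_i])$.

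Next I would connect these local guarantees to the global optimum. Let $I^\ast$ be an optimal independent set of the original problem \eqref{problem} on $(E, \mathcal{I})$, and write $I^\ast_i := I^\ast \cap E_i$. Since $\mathcal{I}$ is downward closed (property \eqref{IS2}), each $I^\ast_i$ lies in $\mathcal{I}[E_i]$, hence $|I^\ast_i| \leq \alpha\, |I_i|$ by the previous paragraph. Because $\{E_i\}_{i=1}^{\gamma}$ partitions $E$, summing yields
\begin{align*}
|I^\ast| \;=\; \sum_{i=1}^{\gamma} |I^\ast_i| \;\leq\; \alpha \sum_{i=1}^{\gamma} |I_i| \;\leq\; \alpha \gamma \max_{i=1,\dots,\gamma} |I_i| \;=\; \alpha\gamma\, |I|,
\end{align*}
so the set $I$ chosen by the algorithm is an $\alpha\gamma$-approximate solution. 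Polynomial running time is immediate: the decomposition is linear in $|E|$, and \algname{OrderedApprox} is invoked $\gamma \leq n$ times, each in polynomial time.

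The only non-routine step is verifying that each $I^\ast_i$ genuinely belongs to the restricted independence system $\mathcal{I}[E_i]$ viewed as an independence system on $G[E_i]$ with local systems $\mathcal{I}_v[E_v \cap E_i]$. This follows because for every $v \in V$, $I^\ast \cap E_v \in \mathcal{I}_v$ and hence by downward closure $I^\ast_i \cap E_v = (I^\ast \cap E_v) \cap E_i \in \mathcal{I}_v$. I do not expect a serious obstacle here; the main subtlety is simply being careful that restricting $\mathcal{I}$ to $E_i$ commutes with the local definition \eqref{locallyIS}, which it does precisely because of property \eqref{IS2}.
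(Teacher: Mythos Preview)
Your proposal is correct and follows essentially the same approach as the paper: show each $G_i$ has width $1$, apply Theorem~\ref{thm:orderedapprox} to get that $I_i$ is an $\alpha$-approximation for $\mathcal{I}[E_i]$, and sum over the partition $\{E_i\}$ to bound the optimum by $\alpha\gamma\,|I|$. The paper's proof is slightly terser (it does not spell out why $G_i$ has width $1$ or why $I^\ast \cap E_i \in \mathcal{I}[E_i]$), but the argument is the same.
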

    \begin{proof}
        Since $G_i = (V, E_i)$ is a forest (i.e., $G_i$ has the width $1$), by Theorem \ref{thm:orderedapprox}, $I_i$ is an $\alpha$-approximate independent set of $\mathcal{I}[E_i]$. Furthermore, any independent set $J$ in $\mathcal{I}$ satisfies that
        \begin{align*}
            |J| = \sum_{i=1}^\gamma |J \cap E_i| \leq \sum_{i=1}^\gamma \alpha|I_i| \leq \alpha\gamma\max_{i=1, \ldots, \gamma}|I_i|,
        \end{align*}
        which completes the proof.
    \end{proof}
    Note that $\alpha < 2$ if and only if $\alpha\gamma < \alpha + 2\gamma -2$. Therefore, an independent set $I$ provided by Algorithm \algname{DecomApprox} has approximation guarantee better than the one provided by Algorithm \algname{OrderedApprox} when $\alpha < 2$. By applying Theorem \ref{thm:DecomApprox} to graphs with degeneracy $k$, we have the following corollary.
    \begin{corollary}
        Algorithm \algname{DecomApprox} computes  an $\alpha k$-approximate independent set $I$ in $\mathcal{I}$ in polynomial time if a given graph $G$ is $k$-degenerate.
    \end{corollary}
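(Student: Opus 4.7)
The plan is to derive the corollary as a direct consequence of Theorem \ref{thm:DecomApprox} together with the definition of degeneracy recalled earlier in the paper. Specifically, if $G$ is $k$-degenerate, then by definition there exists a linear order $\prec$ of the vertex set $V$ with respect to which the width of $G$ satisfies $\gamma \leq k$. As the excerpt already notes, such an order can be produced in linear time by the standard smallest-last rule, i.e., iteratively selecting $v_i$ to be a vertex of minimum degree in $G[V \setminus \{v_1, \dots, v_{i-1}\}]$.

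First, I would feed this order $\prec$ (along with $(E, \mathcal{I})$) into Algorithm \algname{DecomApprox}. By Theorem \ref{thm:DecomApprox}, the output $I$ is an $\alpha \gamma$-approximate independent set in $\mathcal{I}$ computed in polynomial time. Since $\gamma \leq k$, every independent set $J \in \mathcal{I}$ satisfies $|J| \leq \alpha \gamma |I| \leq \alpha k |I|$, which is exactly the claimed $\alpha k$-approximation guarantee.

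For the running time, computing the degeneracy order takes linear time, decomposing $E$ into the $\gamma$ forests $E_1, \ldots, E_\gamma$ takes linear time, and each of the $\gamma \leq k$ invocations of \algname{OrderedApprox} runs in polynomial time; the final maximum selection is also polynomial. Hence the overall running time remains polynomial, completing the proof.

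There is essentially no main obstacle: the only content beyond invoking Theorem \ref{thm:DecomApprox} is the observation $\gamma \leq k$ from the definition of $k$-degeneracy plus the linear-time computability of a certifying order, both already established in the preceding discussion. The proof therefore reduces to a short one-paragraph argument.
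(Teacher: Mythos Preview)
Your proposal is correct and matches the paper's approach exactly: the paper simply states that the corollary follows by applying Theorem~\ref{thm:DecomApprox} to a $k$-degenerate graph, relying on the already-noted fact that a linear order of width at most $k$ can be computed in linear time. No additional argument is needed.
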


\section{Approximation for bipartite graph}
    We note that Algorithms \algname{OrderedApprox} and \algname{DecomApprox} do not provide an independent set with {\it small} approximation ratio if a given graph has no {\it small} \begin{revision}{1} degeneracy\end{revision}. Such examples include complete graphs and complete bipartite graphs.

    In this section, we consider an approximation algorithm for the problem \eqref{problem} where the input graph is bipartite and its degeneracy might not be bounded, and analyze the approximation ratio of the algorithm if all the local independence systems in the one-side of vertices are $k$-systems with independence oracles. Namely, let $(E, \mathcal{I})$ be an independence system defined on a bipartite graph $G=(V_1 \cup V_2, E)$. We consider the case, where every $v \in V_2$ satisfies that $(E_v, \mathcal{I}_v)$ is a $k$-system. 

    \begin{definition}
        For a positive $k \in \mathbb{R}$, an independence system $(E, \mathcal{I})$ is called a {\em $k$-system} if any subset $F \subseteq E$ satisfies
        \begin{align}
            k|I| \geq |J|\ \text{for any two maximal independent sets}\ I\ \text{and}\ J\ \text{in}\ \mathcal{I}[F].
        \end{align}
        
    \end{definition}
    
    Note that any independence system $(E, \mathcal{I})$ is a $|E|$-system and that an independence system is a $1$-system if and only if it is a matroid. By definition, matchoids are independence systems such that local independence systems are all $1$-systems, and hence the families of $b$-matchings are also the ones satisfying that local independence systems are all $1$-systems. Moreover, the families of timed matchings are independence systems such that local independence systems $(E_v, \mathcal{I}_v)$ are all $k$-systems if any time label $L_e$ with $e \in E_v$ is disjoint from $L_f$ with $f \in E_v$ except for at most $k$ edges in $E_v$.
    \begin{algorithm}[ht]
        \caption{\algname{BipartiteApprox}$(G, \mathcal{I})$}
        \label{alg:biartite}
        \begin{algorithmic}[1]
            \Require An independence system $(E, \mathcal{I})$ defined on a bipartite graph $G = (V_1 \cup V_2, E)$, where $V_1 = \{v_1, \dots, v_{n_1}\}$ and $(E_v, \mathcal{I}_v)$ is a $k$-system with an independence oracle for every $v \in V_2$.
            \Ensure An independent set in $\mathcal{I}$.
            \vspace{5pt}
            \State $P_v:=E_v$ for $v \in V_1$.
            \State $R_v:=\emptyset$ and $J_v:=\emptyset$ for $v \in V_2$.
            \For{$i = 1, \ldots, n_1 (= |V_1|)$}\label{algline:firstfor-bi}
            	\State $I_{v_i} := \mathcal{A}_{v_i}(P_{v_i})$. \label{algline:compute_I}
				\For{$(w, v_i) \in I_{v_i}$}
				    \State $J_{w} := J_{w} \cup \{(w, v_i)\}$\label{algline:update_J}
				    \State $R_{w} := R_{w} \cup \{(w, v_j) \in E_{w} \mid j>i\ \text{and}\ J_{w} \cup \{(w, v_j)\} \not \in \mathcal{I}_{w} \}$.\label{algline:update_R-bi}
				\EndFor
				\For{$j = i+1, \ldots, n_1$}
				\State $P_{v_j} = P_{v_j} \setminus (\bigcup_{v \in V_2} R_{v})$.\label{algline:update_P-bi}
				\EndFor
            \EndFor
            \State $R = \bigcup_{v \in V_2} R_v$.
            \State Output $I = \bigcup_{v \in V_1} I_v$ and halt.
        \end{algorithmic}
    \end{algorithm}
    Our algorithm called \algname{BipartiteApprox} can be regarded as variant of Algorithm \algname{OrderedApprox} with a linear order $\prec$ such that $w \prec v$ hold for any $v \in V_1$ and $w \in V_2$. Note that in this order all the vertices $w \in V_2$ fall into Case $4$ at the for-loop in Line \ref{algline:firstfor} in \algname{OrderedApprox}, and hence $X = V_2$ holds after the iteration of the last vertex in $V_2$. Different from \algname{OrderedApprox}, Algorithm \algname{BipartiteApprox} updates $R_w$ for $w \in V_2\ (=X)$ more carefully.
    
    Algorithm \algname{BipartiteApprox} calls local oracles $\mathcal{A}_v$ in an arbitrary order of $v \in V_1$. More precisely, for each $v \in V_1$ and $w \in V_2$, we initialize $P_v = E_v$ and $R_w = \emptyset$, update $P_v$ and $R_w$ accordingly, and compute $I_v = \mathcal{A}_v(P_v)$. Here $P_v$ and $R = \bigcup_{w \in V_2} R_v$ correspond to those in Lemma \ref{lem:subpart}. 
    
    \begin{lemma}\label{lem:conditionP-bi}
        Algorithm \algname{BipartiteApprox} satisfies the following five conditions
        \begin{flalign*}
            \hspace{15pt} \rm{({\romannumeral 1})}&\ P_{v_j} \subseteq E_{v_j}\ \text{with}\ I_{v_j} = \mathcal{A}_{v_j}(P_{v_j})\ \text{for all }\ v_j \in V_1\ \text{with}\ j \leq i,&\\
            \rm{({\romannumeral 2})}&\ R_{v} \subseteq E_{v}\ \text{for all }\ v \in V_2,&\\
            \rm{({\romannumeral 3})}&\ \{J_v \subseteq E_v \mid v \in V_2\}\ \text{is a partition of}\ \bigcup_{j \leq i}I_{v_j},&\\
            \rm{({\romannumeral 4})}&\ J_{v} \ \text{is a maximal independent set in}\ \mathcal{I}[J_v \cup R_v]\ \text{for all}\ v \in V_2,\ \text{and}&\\
            \rm{({\romannumeral 5})}&\ \mathcal{P} = \{P_v \mid v \in V_1\} \cup \{P_w = \emptyset \mid w \in V_2\}\ \text{is a local subpartition of}\ E\\ &\ \text{with residual}\ R=\bigcup_{v \in V_2}R_{v}&
        \end{flalign*}
        at the end of the $i$-th iteration of the for-loop in Line \ref{algline:firstfor-bi}.
    \end{lemma}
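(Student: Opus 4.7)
The plan is induction on $i$, showing all five conditions are maintained through each iteration of the for-loop in Line \ref{algline:firstfor-bi}. The base case $i = 0$ (right after initialization) is immediate: $P_v = E_v$ for $v \in V_1$ and $P_w = R_w = J_w = \emptyset$ for $w \in V_2$ satisfy (i)--(iv) vacuously (with $J_w = \emptyset$ maximal in $\mathcal{I}_w[\emptyset]$), and since the edge sets $E_v$ for distinct $v \in V_1$ are pairwise disjoint in a bipartite graph, $\mathcal{P}$ is a subpartition with empty residual, giving (v).

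For the inductive step, I would first dispatch the bookkeeping conditions. Condition (i) holds for $v_i$ because $P_{v_i}$ starts as $E_{v_i}$ and only loses elements via Line \ref{algline:update_P-bi} in earlier iterations, while $I_{v_i} = \mathcal{A}_{v_i}(P_{v_i})$ is set directly in Line \ref{algline:compute_I}; condition (ii) follows since $R_w$ only gains edges of $E_w$ in Line \ref{algline:update_R-bi}; condition (iii) holds because each $(w, v_i) \in I_{v_i}$ has a unique $V_2$-endpoint $w$ and is inserted into exactly one $J_w$ in Line \ref{algline:update_J}; condition (v) is maintained because in Line \ref{algline:update_P-bi} every edge removed from some $P_{v_j}$ is simultaneously added to $\bigcup_{w \in V_2} R_w$ in Line \ref{algline:update_R-bi}, and the subpartition property is preserved by the disjointness of the $E_v$'s.

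The main obstacle is condition (iv). Its maximality half is straightforward: every $e$ added to $R_w$ in Line \ref{algline:update_R-bi} was inserted precisely when $J_w \cup \{e\} \notin \mathcal{I}_w$ at that moment, and since $J_w$ only grows and $\mathcal{I}_w$ is downward closed, the witnessed dependence persists. To obtain $J_w \in \mathcal{I}_w$, however, I would strengthen the induction with the auxiliary invariant
\begin{align*}
(\ast)\quad J_w \cup \{(w, v_j)\} \in \mathcal{I}_w\ \text{for every}\ w \in V_2\ \text{and every}\ (w, v_j) \in P_{v_j}\ \text{with}\ j \geq i
\end{align*}
at the start of iteration $i$. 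Given $(\ast)$, each edge $(w, v_i) \in I_{v_i} \subseteq P_{v_i}$ appended to $J_w$ preserves independence, so the updated $J_w$ lies in $\mathcal{I}_w$. The invariant $(\ast)$ is itself re-established at the end of iteration $i$: Line \ref{algline:update_R-bi} scans all $j > i$ and routes into $R_w$ exactly those edges that would conflict with the newly updated $J_w$, after which Line \ref{algline:update_P-bi} deletes them from the corresponding $P_{v_j}$; the singleton-independence assumption $\{e\} \in \mathcal{I}$ from the introduction handles the edge case where $J_w$ was empty before iteration $i$.
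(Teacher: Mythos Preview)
Your proof is correct and follows the same inductive structure as the paper's, verifying each of the five conditions through the iterations. Your treatment of condition (iv) via the explicit auxiliary invariant $(\ast)$ is more careful than the paper's one-line ``not difficult to see,'' and in fact $(\ast)$ is precisely the property the paper invokes later in its proof of Lemma~\ref{lem:independence-bi}.
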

        \begin{proof}
        Since $I_{v_j}$ and $P_{v_j}$ are never modified after the $j$-th iteration of the for-loop in Line \ref{algline:firstfor-bi}, it is enough to show that $P_{v_i} \subseteq E_{v_i}$ and $I_{v_i} = \mathcal{A}_{v_i}(P_{v_i})$ at the end of the $i$-th iteration to prove Condition \rm{({\romannumeral 1})}. We can see that $P_{v_i}$ is initialized to $E_{v_i}$, $P_{v_i}$ is modified without adding edge in Line \ref{algline:update_P-bi}, and $I_{v_i} = \mathcal{A}_{v_i}(P_{v_i})$ holds in Line \ref{algline:compute_I}. This implies the desired property.
        
        For $w \in V_2$, we initialize $J_w, R_w= \emptyset$, and they are respectively updated in Lines \ref{algline:update_J} and \ref{algline:update_R-bi} with adding edges in $E_w$, which satisfies $J_w, R_{w} \subseteq E_w$. This implies Condition \rm{({\romannumeral 2})}. Moreover, during the $j$-th iteration, any $(w, v_j) \in I_{v_j}$ is added to $J_{w}$, which implies Condition \rm{({\romannumeral 3})}. It is not difficult to see that Condition \rm{({\romannumeral 4})} is satisfied since $J_w$ and $R_w$ are updated in Lines \ref{algline:update_J} and \ref{algline:update_R-bi}.
        
        Let us finally show Condition \rm{({\romannumeral 5})}. Since $G$ is a bipartite graph, before the first iteration of the for-loop, $\{P_v = E_v \mid v \in V_1\}$ is a partition of $E$. Since $P_{v_j}$ is modified without adding edge in Line \ref{algline:update_P-bi}, $\{P_v \mid v\in V_1\}$ is a subpartition of $E$. Moreover, $R_{w}$ is initialized to $R_{w} = \emptyset$ for $w \in V_2$, and when some edge $e = (w, v_j)$ is added to $R_{w}$ in Line \ref{algline:update_R-bi}, all such edges $e$ are deleted from the corresponding $P_{v_j}$ in Line \ref{algline:update_P-bi}. Thus Condition \rm{({\romannumeral 5})} is satisfied at the end of the $i$-th iteration, which completes the proof of the lemma.
    \end{proof}

    \begin{lemma}\label{lem:independence-bi}
    Algorithm \algname{BipartiteApprox} satisfies that
        $\bigcup_{j < i}I_{v_j} \in \mathcal{I}$
    in the beginning of the $i$-th iteration of the for-loop in Line \ref{algline:firstfor-bi}.
    \end{lemma}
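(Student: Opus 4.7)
The plan is to reduce the claim to the invariants already established in Lemma~\ref{lem:conditionP-bi}. The state at the beginning of the $i$-th iteration coincides with the state at the end of the $(i-1)$-th iteration (the case $i=1$ is trivial since the union is empty, which lies in $\mathcal{I}$ by \eqref{IS1}), so I may invoke Conditions \rm{({\romannumeral 1})}--\rm{({\romannumeral 5})} with parameter $i-1$. Writing $I = \bigcup_{j<i} I_{v_j}$, the goal is to verify $I \cap E_v \in \mathcal{I}_v$ for every $v \in V_1 \cup V_2$.

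First I handle the side $V_1$. Since $G$ is bipartite, $E_{v_j} \cap E_{v_l} = \emptyset$ for any two distinct vertices $v_j, v_l \in V_1$, so $I \cap E_{v_j}$ equals $I_{v_j}$ when $j < i$ and is empty otherwise. In the former case Condition \rm{({\romannumeral 1})} of Lemma~\ref{lem:conditionP-bi} gives $I_{v_j} = \mathcal{A}_{v_j}(P_{v_j})$ with $P_{v_j} \subseteq E_{v_j}$; by the defining property \eqref{ALO1} of the local oracle this is independent in $\mathcal{I}_{v_j}$. The empty case is immediate from \eqref{IS1}. Next, for $w \in V_2$, Conditions \rm{({\romannumeral 2})} and \rm{({\romannumeral 3})} together say that $\{J_v \mid v \in V_2\}$ is a partition of $I$ with each $J_v \subseteq E_v$, which forces $J_w = I \cap E_w$. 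Condition \rm{({\romannumeral 4})} says $J_w$ is a maximal independent set of $\mathcal{I}_w[J_w \cup R_w]$, so in particular $J_w \in \mathcal{I}_w$. Combining the two sides and using the definition \eqref{locallyIS} yields $I \in \mathcal{I}$.

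The only subtle point is making sure the $V_2$-side invariant is actually maintained, i.e.\ that $J_w$ really is a maximal independent set in $\mathcal{I}_w[J_w \cup R_w]$; but this is precisely what the loop in Lines~\ref{algline:update_J}--\ref{algline:update_R-bi} is designed to enforce (and is proved in Lemma~\ref{lem:conditionP-bi}). Once that bookkeeping is in place, the bipartite structure decouples the two sides so the lemma is essentially a direct corollary of the invariants, rather than requiring a fresh argument. I do not expect any real obstacle beyond carefully aligning ``beginning of iteration $i$'' with ``end of iteration $i-1$.''
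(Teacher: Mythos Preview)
Your argument is correct, and it is in fact a little more direct than the paper's own proof. The paper argues by induction on $i$: assuming $\bigcup_{j<l}I_{v_j}\in\mathcal{I}$, it shows that adjoining $I_{v_l}$ keeps the union independent by observing (from the update rule in Line~\ref{algline:update_R-bi}) that $J^{(l)}_{w}\cup\{e\}\in\mathcal{I}_w$ whenever $e\in E_w\setminus R^{(l)}_w$, and then noting that each $w\in V_2$ receives at most one new edge from $I_{v_l}$, which by Condition~\rm{({\romannumeral 5})} is not in $R^{(l)}_w$. You bypass this incremental step entirely: you read off the invariants at the end of iteration $i-1$ and use Condition~\rm{({\romannumeral 4})} directly to certify $J_w\in\mathcal{I}_w$, together with the bipartite structure to identify $J_w$ with $I\cap E_w$. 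This is cleaner because it uses only the five stated conditions of Lemma~\ref{lem:conditionP-bi} and no auxiliary observation about the algorithm; the paper's route, on the other hand, makes more transparent \emph{why} the independence is preserved across a single iteration.

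One small bookkeeping point: you cite Condition~\rm{({\romannumeral 2})} for the containment $J_v\subseteq E_v$, but \rm{({\romannumeral 2})} concerns $R_v$, not $J_v$. The needed containment is immediate from Line~\ref{algline:update_J} (only edges of $E_w$ are ever added to $J_w$); it is also noted in the proof of Lemma~\ref{lem:conditionP-bi}. With that correction the deduction $I\cap E_w=J_w$ goes through exactly as you wrote, using that distinct $w,w'\in V_2$ satisfy $E_w\cap E_{w'}=\emptyset$ in a bipartite graph.
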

    \begin{proof}
        We prove the statement in Lemma \ref{lem:independence-bi} by the induction on $i$. If $i = 1$, the statement clearly holds since $\emptyset \in \mathcal{I}$. Assume that the statement holds when $i = l$ and consider the case $i = l+1$. 
        
        For $v \in V_1$ and $q = l, l+1$, let $P^{(q)}_v$ denotes the set $P_v$ in the beginning of the $q$-th iteration of the for-loop in Line \ref{algline:firstfor-bi}. Note that for $v \in V_1$, $I_v$ is never modified after it is computed in Line 4. Similarly, for $w \in V_2$ and $q = l, l+1$, let $J^{(q)}_w$ and $R^{(q)}_w$ respectively denote the sets $J_w$ and $R_w$ in the beginning of the $q$-th iteration, and let $J^{(q)} = \bigcup_{w \in V_2}J^{(q)}_w$.
        
        By Lemma \ref{lem:conditionP-bi} \rm{({\romannumeral 3})}, we have 
        \begin{align*}
        \bigcup_{j < l+1}I_{v_{j}} = I_{v_{l}} \cup \bigcup_{j < l}I_{v_{j}} = I_{v_{l}} \cup \bigcup_{w \in V_2}J^{(l)}_w. 
        \end{align*}
        
        Note that $J^{(l)}_{w} \cup \{e\} \in \mathcal{I}_{w}$ for any $w \in V_2$ and $e \in E_w \setminus R^{(l)}_w$. Since $I_{v_l} \subseteq P^{(l)}_{v_{l}} \subseteq E \setminus (\bigcup_{w \in V_2} R^{(l)}_w)$ by Lemma \ref{lem:conditionP-bi} \rm{({\romannumeral 1})} and \rm{({\romannumeral 5})}, 
        \begin{align*}
            I_{v_{l}} \cup \bigcup_{w \in V_2}J^{(l)}_w \in \mathcal{I},
        \end{align*}
        which completes the inductive argument.
    \end{proof}
    
    \begin{theorem}\label{thm:bipartiteapprox}
        Let $(E, \mathcal{I})$ be an independence system defined on a bipartite graph $G = (V_1 \cup V_2, E)$ such that $(E_v, \mathcal{I}_v)$ is a $k$-system with an independence oracle for every $v \in V_2$. Then Algorithm \algname{BipartiteApprox} computes an $(\alpha + k)$-approximate independent set $I$ in $\mathcal{I}$ in polynomial time.
    \end{theorem}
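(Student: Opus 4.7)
The plan is to combine the structural properties established in Lemmas \ref{lem:conditionP-bi} and \ref{lem:independence-bi} with the $k$-system hypothesis and apply Lemma \ref{lem:subpart}. Concretely, at termination Lemma \ref{lem:independence-bi} (applied in the iteration $i=n_1+1$, or rather via the same inductive argument extended one more step) guarantees that $I=\bigcup_{v\in V_1}I_v\in\mathcal{I}$, while Lemma \ref{lem:conditionP-bi} guarantees that $\mathcal{P}=\{P_v\mid v\in V_1\}\cup\{\emptyset\mid w\in V_2\}$ is a local subpartition of $E$ with residual $R=\bigcup_{w\in V_2}R_w$. Moreover $I_v=\mathcal{A}_v(P_v)$ for every $v\in V_1$, so $I$ is exactly the union in Lemma \ref{lem:subpart}. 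Hence it suffices to show that $\max_{J\in\mathcal{I}[R]}|J|\le k\,|I|$, which by Lemma \ref{lem:subpart} yields the $(\alpha+k)$-approximation.

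To establish that bound, I will exploit the bipartite structure: since every edge has exactly one endpoint in $V_2$, the family $\{R_w\mid w\in V_2\}$ partitions $R$, and similarly $\{J_w\mid w\in V_2\}$ partitions $I$ by Condition \rm{({\romannumeral 3})} of Lemma \ref{lem:conditionP-bi}. Fix an arbitrary $J\in\mathcal{I}[R]$ and any $w\in V_2$. Then $J\cap R_w\in\mathcal{I}_w[R_w]\subseteq\mathcal{I}_w[J_w\cup R_w]$, so I can extend $J\cap R_w$ to a maximal independent set $M_w$ of $\mathcal{I}_w[J_w\cup R_w]$. Condition \rm{({\romannumeral 4})} of Lemma \ref{lem:conditionP-bi} says that $J_w$ is already maximal in $\mathcal{I}_w[J_w\cup R_w]$, and since $(E_w,\mathcal{I}_w)$ is a $k$-system, the two maximal sets satisfy $|M_w|\le k\,|J_w|$. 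Therefore $|J\cap R_w|\le|M_w|\le k\,|J_w|$, and summing over $w\in V_2$ gives $|J|=\sum_{w\in V_2}|J\cap R_w|\le k\sum_{w\in V_2}|J_w|=k\,|I|$, as required.

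The main obstacle, in my view, is the careful bookkeeping behind Condition \rm{({\romannumeral 4})}: the update of $R_w$ in Line \ref{algline:update_R-bi} is precisely designed so that $J_w$ remains a maximal independent set inside $\mathcal{I}_w[J_w\cup R_w]$ (an edge $(w,v_j)$ is pushed into $R_w$ exactly when $J_w\cup\{(w,v_j)\}\notin\mathcal{I}_w$, and the downward-closure of $\mathcal{I}_w$ guarantees maximality persists as $J_w$ grows). This is the only place where the independence oracle for $\mathcal{I}_w$ is used, and it is what couples the algorithm to the $k$-system bound above; without it the extension step in the previous paragraph would fail. Once this lemma-level invariant is in hand, the bound $|J|\le k|I|$ and the call to Lemma \ref{lem:subpart} finish the approximation-ratio claim.

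Finally, the running time is polynomial: the outer loop runs $n_1$ times, each iteration performs one call to the $\alpha$-approximation local oracle $\mathcal{A}_{v_i}$ and, for each edge chosen into $I_{v_i}$, at most $|E_w|$ independence-oracle queries in Line \ref{algline:update_R-bi}, plus $O(|E|)$ set updates in the inner loop over $j$. Summing these contributions gives a polynomial bound in $|V|+|E|$ plus the oracle costs, completing the proof.
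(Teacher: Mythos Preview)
Your proof is correct and follows essentially the same approach as the paper: you invoke Lemmas \ref{lem:conditionP-bi} and \ref{lem:independence-bi} to obtain that $I\in\mathcal{I}$ and that $\mathcal{P}$ is a local subpartition with residual $R$, then use Condition \rm{({\romannumeral 4})} together with the $k$-system property to bound $\max_{J\in\mathcal{I}[R]}|J|\le k|I|$, and finish with Lemma \ref{lem:subpart}. The only difference is that you spell out the extension of $J\cap R_w$ to a maximal set $M_w$ explicitly, whereas the paper jumps directly to $|L\cap R_w|\le k|J_w|$; your version is a bit more detailed but the argument is the same.
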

    \begin{proof}
        For a vertex $v \in V_1$, let $I_v$ and $P_v$ be the sets obtained by Algorithm \algname{BipartiteApprox}, and for a vertex $w \in V_2$, let $J_w$ and $R_w$ be the sets obtained by the algorithm. By Lemma \ref{lem:independence-bi}, $I = \bigcup_{v \in V_1} I_v$ is an independent set in $\mathcal{I}$. By Lemma \ref{lem:conditionP-bi}, $\mathcal{P} = \{P_v \mid v \in V_1\}$ is a local subpartition of $E$ with the residual $R=\bigcup_{w \in V_2}R_{w}$. 
        Let us analyze $\max_{L \in \mathcal{I}[R]}|L|/|I|$ to apply Lemma \ref{lem:subpart}. By Lemma \ref{lem:conditionP-bi} \rm{({\romannumeral 4})}, $J_w$ is a maximal independent set in $\mathcal{I}[J_w \cup R_w]$ for $w \in V_2$. Since the every local independence system $(E_w, \mathcal{I}_w)$ is a $k$-system for $w \in V_2$, 
        \begin{align*}
            |L| = \sum_{w \in V_2}|L \cap R_w| \leq k\sum_{w \in V_2}|J_w| = k|I|
        \end{align*}
        for any $L \in \mathcal{I}[R]$. Thus, by applying Lemma \ref{lem:subpart} to this $I$, we can conclude that $I$ is an $(\alpha + k)$-approximate independent set in $\mathcal{I}$.
    \end{proof}
    Before concluding this section, we remark that independence systems on graphs are $2k$-systems if all local independence systems are $k$-systems, which is shown in the following lemma.
    Since the maximization for $k$-systems $(E, \mathcal{I})$ are approximable with ratio $k$ by computing a maximal independent set in $\mathcal{I}$, 
    Algorithm \algname{BipartiteApprox} improves the ratio $2k$ to $(\alpha + k)$ for independence systems on graph bipartite graphs $G = (V_1\cup V_2, E)$ if the all local independence systems are $k$-systems and $\alpha$-approximation local oracles in $V_1$ are given for $\alpha$ with $\alpha < k$.
    \begin{lemma}
        An independence system $(E, \mathcal{I})$ on graph $G=(V, E)$ is a $2k$-system if all local independence systems $(E_v, \mathcal{I}_v)$ are $k$-systems.
    \end{lemma}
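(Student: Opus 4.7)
The goal is to show that for any $F \subseteq E$ and any two maximal independent sets $I, J \in \mathcal{I}[F]$, we have $|J| \leq 2k|I|$. My plan is to localize the comparison at each vertex through a charging scheme, rather than the naive attempt of extending $I \cap E_v$ to a maximal independent set inside $\mathcal{I}_v[F \cap E_v]$, whose size cannot be bounded in terms of $|I|$ in general.

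First, I exploit the maximality of $I$ in $\mathcal{I}[F]$: for every $e \in F \setminus I$ the set $I \cup \{e\}$ is not in $\mathcal{I}$, which by the defining equation \eqref{locallyIS} of $\mathcal{I}$ forces at least one endpoint $v$ of $e$ to satisfy $(I \cap E_v) \cup \{e\} \notin \mathcal{I}_v$. For each $e \in J \setminus I$ I fix one such blocking endpoint and let $S_v$ denote the set of edges of $J \setminus I$ assigned to $v$. Then $\{S_v\}_{v \in V}$ partitions $J \setminus I$, so $\sum_{v \in V} |S_v| = |J \setminus I|$, and $S_v \subseteq E_v \cap J$ for every $v$.

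The key observation is that in the restricted family $\mathcal{I}_v\bigl[(I \cap E_v) \cup S_v\bigr]$ the set $I \cap E_v$ is already a maximal independent set, since by construction every $e \in S_v$ satisfies $(I \cap E_v) \cup \{e\} \notin \mathcal{I}_v$. Meanwhile, $(J \cap I \cap E_v) \cup S_v$ is independent in this restricted family because it is a subset of $J \cap E_v \in \mathcal{I}_v$. Extending it to a maximal independent set in the restricted family and applying the $k$-system property of $(E_v, \mathcal{I}_v)$ yields $|J \cap I \cap E_v| + |S_v| \leq k |I \cap E_v|$. Summing over $v \in V$ and using the handshake identities $\sum_v |I \cap E_v| = 2|I|$ and $\sum_v |J \cap I \cap E_v| = 2|J \cap I|$, I obtain $2|J \cap I| + |J \setminus I| \leq 2k|I|$, which rearranges to $|J| \leq 2k|I| - |J \cap I| \leq 2k|I|$.

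The delicate point is to work with $\mathcal{I}_v\bigl[(I \cap E_v) \cup S_v\bigr]$ instead of $\mathcal{I}_v[F \cap E_v]$: only on this smaller ground set is $I \cap E_v$ forced to be maximal, and this is exactly what lets the $k$-system inequality put $|I \cap E_v|$ on the right-hand side. The charging $\phi$ ensures each edge of $J \setminus I$ is counted at exactly one endpoint, so the handshake sum collapses without double counting; this is what produces the factor $2k$ rather than, say, $k$ times the sum of local degrees.
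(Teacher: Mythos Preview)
Your proof is correct and follows essentially the same approach as the paper: localize at each vertex, use maximality of $I$ to identify, for every edge of $F\setminus I$, an endpoint at which the edge is blocked, observe that $I\cap E_v$ is maximal in the restricted local system on $I\cap E_v$ together with the blocked edges, apply the $k$-system inequality there, and sum via the handshake identity.

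The only organisational difference is that the paper works with \emph{two} local families: $S_v=\{(u,v)\in F\setminus I\mid I_v\cup\{(u,v)\}\in\mathcal{I}_v\}$ and $T_v=\{(u,v)\in F\setminus I\mid I_u\cup\{(u,v)\}\in\mathcal{I}_u\}$, shows $\bigcup_v S_v=\bigcup_v T_v$, and bounds $|J\cap T_v|$ and $|J\cap(E_v\setminus S_v)|$ separately by $k|I_v|$. Your version is a bit more economical: you assign each edge of $J\setminus I$ directly to one blocking endpoint and apply the $k$-system bound once per vertex, obtaining the slightly sharper intermediate inequality $|J|+|J\cap I|\le 2k|I|$. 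Both routes hinge on the same key observation---that $I\cap E_v$ is maximal in $\mathcal{I}_v$ restricted to $I\cap E_v$ together with the locally blocked edges---so the difference is cosmetic rather than substantive.
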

    \begin{proof}
        For a subset $F \subseteq E$, let $I$ be a maximal independent set in $\mathcal{I}[F]$. For each vertex $v\in V$, let $I_v = I \cap E_v$, and define $S_v$ and $T_v$ by
        \begin{align*}
            S_v &= \{(u, v) \in F \setminus I \mid I_v \cup \{(u, v)\} \in \mathcal{I}_v\}\\
            T_v &= \{(u, v) \in F \setminus I \mid I_u \cup \{(u, v)\} \in \mathcal{I}_u\}.
        \end{align*}
        Since $I$ is a maximal independent set in $\mathcal{I}[F]$, $S_v \cap T_v = \emptyset$ for all $v \in V$, and $\mathcal{S} = \{S_v \mid v \in V\}$ and $\mathcal{T} = \{T_v \mid v \in V\}$ are both subpartitions of $F$ with $\bigcup_{v \in V} S_v = \bigcup_{v \in V} T_v$. Moreover, $I_v$ is a maximal independent set in $\mathcal{I}[I_v \cup (E_v \setminus S_v)]$ and $\mathcal{I}[I_v \cup T_v]$. Therefore, for any independent set $J$ in $\mathcal{I}[F]$, we have
        \begin{align*}
            |J| &= \frac{1}{2} \sum_{v \in V}|J \cap E_v|\\
            &= \frac{1}{2}\sum_{v \in V}|J \cap S_v| + \frac{1}{2}\sum_{v \in V}|J \cap (E_v \setminus S_v)| \\
            &= \frac{1}{2} \sum_{v \in V}|J \cap T_v| + \frac{1}{2}\sum_{v \in V}|J \cap (E_v \setminus S_v)| \\
            &\leq \frac{1}{2}\sum_{v \in V}k|I_v| + \frac{1}{2}\sum_{v \in V}k|I_v|\\
            &= 2k|I|.
        \end{align*}
        Here the second equality follows from $\mathcal{S}$ is a subpartition of $F$, the third equality follows from $\bigcup_{v \in V} S_v = \bigcup_{v \in V} T_v$, and the first inequality follows from the fact that local independence systems are all $k$-systems.
    \end{proof}
    
\section{Hypergraph generalization of the probslem}
    In this section we consider independence systems defined on hypergraphs and present two algorithms. The first algorithm corresponds to Algorithm \algname{OrderedApprox} for the problem \eqref{problem} whose input graph is a forest, and the second algorithm corresponds to Algorithm \algname{DecomApprox} for the problem \eqref{problem}.

    \begin{definition}
        Let $G=(V, E)$ be a hypergraph with a vertex set $V$ and a hyperedge set $E \subseteq 2^V$.
        For each vertex $v$ in $V$, let $(E_v, \mathcal{I}_v)$ be an independence system on the set $E_v$ of hyperedges incident to $v$, i.e., $E_v = \{e \in E \mid v \in e \}$, and let $\mathcal{I} = \{I \subseteq E \mid I \cap E_v \in \mathcal{I}_v\ \text{for all}\ v \in V \}$. We say that $(E, \mathcal{I})$ {\em is an independence system defined on a hypergraph} $G$.
    \end{definition}
    The generalization contains the maximum matching problem for hypergraphs. Similarly to the graph case, we make use of local oracles $\mathcal{A}_v$ for each $v$ in $V$, which satisfy (\ref{ALO1}) and (\ref{ALO2}). In order to generalize the idea of Algorithm \algname{OrderedApprox} to the hypergraph case, we define the upward and downward edge sets with respect to a linear order of vertices $V$.
    \begin{definition}\label{def:horder}
        For a hypergraph $G = (V, E)$, let $\prec$ be a linear order of vertices $V$.
        For a vertex $v$, let $U_{v} = \{e \in E_v \mid v \prec w\ \text{for some}\ w \in e\}$ and $D_{v} = \{e \in E_v \mid w \prec v\ \text{for all}\ w \in e\}$ be the sets of {\em upward} and {\em downward} edges incident to $v$, respectively. We define the {\em width} of $G$ (with respect to a linear order $\prec$) as $\max_{v \in V}|U_{v}|$.
    \end{definition}
    For a hypergraph $G=(V, E)$ and a vertex set $W \subseteq V$, $G[W]=(W, E[W])$ denotes the simple subhypergraph of $G$ induced by $W$, where $E[W] = \{e \cap W \mid e \in E,\ |e\cap W| \geq 2\}$. Note that for $W \subseteq V$, $G[W]$ is the subgraph of $G$ induced by $W$ if $G$ is a graph. Similarly to the graph case, a hypergraph $G$ of width $k$ satisfies that $G[W]$ has a vertex of degree at most $k$ for all $W \subseteq V$. Therefore, a linear order certifying that a hypergraph $G$ has width $k$ can be computed in linear time.
    
    Algorithm \algname{OrderedApprox$^*$} for hypergraphs works similarly to Algorithm \algname{OrderedApprox}. Different from Algorithm \algname{OrderedApprox}, the algorithm updates $P_v$ and $R_v$ based on the fact that $G$ is a hypergraph. Note that such differences appear in Lines 4, 14, and 28 in the description of the algorithm.
    \begin{algorithm}
        \caption{\rm{\algname{OrderedApprox$^*$}}$(G, \mathcal{I}, \prec)$}
        \label{alg:hordered}
        \begin{algorithmic}[1]
            \Require An independence system $(E, \mathcal{I})$ defined on a hypergraph $G = (V, E)$ and a linear order $v_1 \prec v_2 \prec \ldots \prec v_n$ of vertices $V$.
            \Ensure An independent set in $\mathcal{I}$.
            \vspace{5pt}
            \State $X := \emptyset$.
            \State $P_v := D_v$ for $v \in V$.
            \For{$i = 1, \ldots, n$}\label{algline:hfirstfor}
                \State $B_{v_i} = \left\{e \in U_{v_i} \setminus \left(\bigcup_{l < i} R_{v_l}\right) \middle| \begin{array}{l}
                    e \in \mathcal{A}_{v_i}(P_{v_i} \cup \{e\})\ \text{and}\medskip\\
                    | \mathcal{A}_{v_i}(P_{v_i} \cup \{e\})| > |\mathcal{A}_{v_i}(P_{v_i})|
                \end{array}\right\}.$ 
            	\If{$U_{v_i}= \emptyset$}
    				\State $I_{v_i} := \mathcal{A}_{v_i}(P_{v_i})$.
    				\State $R_{v_i}:=\emptyset$. \label{algline:hconstructR-1}
            	\ElsIf{$U_{v_i} \not = \emptyset$, $P_{v_i} \not = \emptyset$, and $B_{v_i} = \emptyset$}
				    \State Choose an edge $e \in U_{v_i}$ arbitrarily.
				    \If{$e \not \in \mathcal{A}_{v_i}(P_{v_i} \cup\{e\})$}
				        \State $I_{v_i} := \mathcal{A}_{v_i}(P_{v_i} \cup \{e\})$. 
				    \Else{ (i.e., $|\mathcal{A}_{v_i}(P_{v_i} \cup \{e\})| = |\mathcal{A}_{v_i}(P_{v_i})|$)}
				        \State $I_{v_i} := \mathcal{A}_{v_i}(P_{v_i})$.\Comment{16.5em}{$|I_{v_i}| = |\mathcal{A}_{v_i}(P_{v_i}\cup \{e\})|$} 
				    \EndIf
    				\State $P_{v_i} := P_{v_i} \cup \{e\}$.\label{algline:hupdate_P-1}
    				\State $R_{v_i} := U_{v_i} \setminus \{e\}$. \label{algline:hconstructR-2}
    				\For{$v_j \in e \setminus \{v_i\}$}
    				    \State $P_{v_j} := P_{v_j} \setminus \{e\}$.\label{algline:hupdate_P-2}
    				\EndFor
    			\ElsIf{$U_{v_i} \not = \emptyset$, $P_{v_i} \not = \emptyset$, and $B_{v_i} \not = \emptyset$}
				    \State Choose an edge $b \in B_{v_i}$ arbitrarily.
					\State $I_{v_i} := \mathcal{A}_{v_i}(P_{v_i} \cup \{b\}) \setminus \{b\}$. 
					\Comment{17em}{$I_{v_i} \subseteq P_{v_i}$ and $|I_{v_i}| = |\mathcal{A}_{v_i}(P_{v_i})|$}
					\State $R_{v_i} := U_{v_i} \setminus \{b\}$.\label{algline:hconstructR-3}
    			\Else{ (i.e., $U_{v_i} \not = \emptyset$ and $P_{v_i} = \emptyset$)}
    				\State $I_{v_i} := \emptyset$.
    				\State $R_{v_i}:=\emptyset$. \label{algline:hconstructR-4} \Comment{17em}{$R_{v_i}$ might be updated in line \ref{algline:hupdate_R}}
    				\State $X := X \cup \{v_i\}$.
                \EndIf
				\algstore{break}
        \end{algorithmic}
    \end{algorithm}
    \begin{algorithm}
        \begin{algorithmic}[1]
            \algrestore{break}
            \For{$e \in I_{v_i}$}
				    \For{$v_j \in e \setminus \{v_i\}$}
				        \If{$v_j \in X$}
				            \State $R_{v_j} := \{e \in U_{v_j} \setminus (\bigcup_{l < i} R_{v_l}) \mid v_i \prec u\ \text{for some}\ u \in e\}$.\label{algline:hupdate_R}
				            \State $X := X \setminus \{v_j\}$.
				        \EndIf
				    \EndFor
				\EndFor
            \For{$j = i+1, \ldots, n$}
				\State $P_{v_j} := P_{v_j} \setminus (\bigcup_{l \leq i} R_{v_l})$.\label{algline:hupdate_P-3}
				\EndFor
            \EndFor
            \State $R= \bigcup_{v \in V} R_v$.
            \State Output $I= \bigcup_{v \in V} I_v$ and halt.
        \end{algorithmic}
    \end{algorithm}
    For each $v_i \in V$, define a set $B_{v_i} \subseteq U_{v_i}$ as follows
    \begin{align*}
        B_{v_i} = \left\{e \in U_{v_i} \setminus \left(\bigcup_{l < i} R_{v_l}\right) \middle| \begin{array}{l}
             e \in \mathcal{A}_{v_i}(P_{v_i} \cup \{e\})\ \text{and}\medskip\\
             | \mathcal{A}_{v_i}(P_{v_i} \cup \{e\})| > |\mathcal{A}_{v_i}(P_{v_i})|
        \end{array}
        \right\}.
    \end{align*} 
    It separately treats the following four cases as in the description of the algorithm.
    \begin{flalign*}
        \hspace{20pt}\text{Case 1:}&\ U_{v_i} = \emptyset &\\
        \text{Case 2:}&\ U_{v_i} \not= \emptyset,\ P_{v_i} \not= \emptyset,\ \text{and}\ B_{v_i} = \emptyset&\\
        \text{Case 3:}&\ U_{v_i} \not= \emptyset,\ P_{v_i} \not= \emptyset,\ \text{and}\ B_{v_i} \not= \emptyset&\\
        \text{Case 4:}&\ U_{v_i} \not= \emptyset\ \text{and}\ P_{v_i} = \emptyset&
    \end{flalign*}
    We show the following two lemmas, which respectively correspond to Lemmas \ref{lem:conditionP} and \ref{lem:independence} for the graph case.
    
    \begin{lemma}\label{lem:hconditionP}
    Algorithm \algname{OrderedApprox$^*$} satisfies the following three conditions
    \begin{flalign*}
        \hspace{15pt} \rm{({\romannumeral 1})}&\ I_{v_j} \subseteq P_{v_j} \subseteq E_{v_j}\ \text{with}\ |I_{v_j}| = |\mathcal{A}_{v_j}(P_{v_j})|\ \text{for all }\ j \leq i,&\\
        \rm{({\romannumeral 2})}&\ R_{v_j} \subseteq U_{v_j}\ \text{for all }\ j \leq i,\ \text{and}&\\
        \rm{({\romannumeral 3})}&\ \mathcal{P} = \{P_v \mid v \in V\}\ \text{is a local subpartition of}\ E\ \text{with residual}\ R=\bigcup_{j \leq i}R_{v_j}&
    \end{flalign*}
    at the end of the $i$-th iteration of the for-loop.
    \end{lemma}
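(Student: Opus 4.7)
The plan is to adapt the proof of Lemma \ref{lem:conditionP} to the hypergraph setting, following the same three-part structure but taking care of the fact that a hyperedge can have more than two endpoints.

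For Condition (i), observe that both $I_{v_j}$ and $P_{v_j}$ are untouched after the $j$-th iteration of the outer for-loop, so it suffices to show the claim for $j=i$ at the end of the $i$-th iteration. Trace where $P_{v_i}$ is modified: it is initialized to $D_{v_i}\subseteq E_{v_i}$, possibly augmented by a single edge $e\in U_{v_i}\subseteq E_{v_i}$ in Line \ref{algline:hupdate_P-1} (only in Case 2), and shrunk in Lines \ref{algline:hupdate_P-2} and \ref{algline:hupdate_P-3}. Hence $P_{v_i}\subseteq E_{v_i}$ throughout. For $I_{v_i}$, enumerate the assignments (Cases 1--4); in each case $I_{v_i}\subseteq P_{v_i}$ and $|I_{v_i}|=|\mathcal{A}_{v_i}(P_{v_i})|$ can be read off directly, using monotonicity of $\mathcal{A}_{v_i}$ in Case 2 (where $|\mathcal{A}_{v_i}(P_{v_i}\cup\{e\})|=|\mathcal{A}_{v_i}(P_{v_i})|$ when $e\in \mathcal{A}_{v_i}(P_{v_i}\cup\{e\})$ would contradict $B_{v_i}=\emptyset$) and Case 3 (where removing $b$ from the $\alpha$-approximate set yields exactly $|\mathcal{A}_{v_i}(P_{v_i})|$ by definition of $B_{v_i}$).

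Condition (ii) is immediate from the construction of $R_{v_j}$: in Lines \ref{algline:hconstructR-1}, \ref{algline:hconstructR-2}, \ref{algline:hconstructR-3}, \ref{algline:hconstructR-4} the set $R_{v_j}$ is assigned either $\emptyset$ or a subset of $U_{v_j}\setminus\{e\}$; the only subsequent modification is in Line \ref{algline:hupdate_R}, which also produces a subset of $U_{v_j}$.

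For Condition (iii), I argue by induction on $i$. The base case is that before the first iteration, $\{P_v=D_v\mid v\in V\}$ is a local subpartition of $E$ with empty residual: every hyperedge $e\in E$ belongs to $D_v$ for the unique $\prec$-maximum vertex $v\in e$, so the $D_v$'s partition $E$. For the inductive step, assume the property holds at the start of iteration $i$. The only place where an edge is \emph{added} to some $P_v$ is Line \ref{algline:hupdate_P-1} in Case 2, which adds a single $e\in U_{v_i}$ to $P_{v_i}$; the immediately following for-loop in Lines 17--19 removes $e$ from $P_{v_j}$ for every other endpoint $v_j\in e\setminus\{v_i\}$, so disjointness of $\{P_v\}$ is preserved. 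All other modifications only shrink the $P_v$'s, and every edge placed into some $R_{v_l}$ (either in the current or a prior iteration) is removed from every $P_{v_j}$ with $j>i$ via Line \ref{algline:hupdate_P-3}, and from $P_{v_i}$ itself whenever $R_{v_i}$ is constructed (since $R_{v_i}\cap P_{v_i}\subseteq U_{v_i}\setminus(P_{v_i}\cup\{e\})=\emptyset$ in Case 2 by the reassignment, and trivially in the other cases). This ensures $\bigcup_{v\in V}P_v$ and $\bigcup_{j\leq i}R_{v_j}$ are disjoint and that their union is exactly $E$ (no edge has been discarded without being charged to some $R$).

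The main obstacle is the bookkeeping in the last paragraph: unlike the graph case, when an edge $e$ is absorbed by some $P_{v_i}$, it may currently live in several $P_{v_j}$'s (one per endpoint), so one has to argue that the inner for-loop in Lines 17--19 covers all of them; symmetrically, when an edge $e\in R_{v_l}$ is created, it must be stripped from every $P_{v_j}$ with $v_j\in e$ that has not yet been processed, which is handled uniformly by Line \ref{algline:hupdate_P-3} for later indices and by construction of $R_{v_i}$ for the current vertex. Once these invariants are checked case by case, Condition (iii) follows.
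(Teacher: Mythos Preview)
Your proposal is correct and takes exactly the same approach as the paper, whose entire proof reads ``It is analogous to the proof of Lemma~\ref{lem:conditionP} for the graph case.'' Your write-up is a faithful and appropriately detailed expansion of that analogy, including the one genuinely new bookkeeping point (that a hyperedge may have several endpoints, so Lines~17--19 must strip the absorbed edge from every other $P_{v_j}$).
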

    \begin{proof}
        It is analogous to the proof of Lemma \ref{lem:conditionP} for the graph case.
    \end{proof}
    Different from the graph case, the following lemma requires that $v \in V$ $e \cap f = \{v\}$ holds for $e, f \in P_{v}$, i.e, $\mathcal{A}_{v}(P_v) \in \mathcal{I}$. Note that if $\max_{e \in E} |e| \leq 2$, the required condition is satisfied.
    \begin{lemma}\label{lem:hindependence}
    Algorithm \algname{OrderedApprox$^*$} satisfies that in the beginning of the $i$-th iteration of the for-loop,
    \begin{align}
        I \cup \bigcup_{j < i}I_{v_j} \in \mathcal{I}\ \text{for any independent set}\ I \in \mathcal{I}[P_{v_i}] \label{lemeq:hindependence}
    \end{align}
    if $e \cap f = \{v_j\}$ holds for $j \leq i$ and for $e, f \in P_{v_j}$.
    \end{lemma}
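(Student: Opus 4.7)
The plan is to adapt the inductive proof of Lemma \ref{lem:independence} to the hypergraph setting, proceeding by induction on $i$. The base case $i = 1$ is immediate: since $v_1$ is smallest in $\prec$, we have $P_{v_1} = D_{v_1} = \emptyset$, so any $I \in \mathcal{I}[P_{v_1}]$ is empty and trivially belongs to $\mathcal{I}$.

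For the inductive step, I would replicate the graph-case reasoning at each previously processed vertex $v_j$ with $j < i$. By inspecting the four cases that determine $I_{v_j}$ and $R_{v_j}$, one obtains that $I_{v_j} \cup \{e\} \in \mathcal{I}_{v_j}$ for every $e \in U_{v_j} \setminus (P_{v_j} \cup R_{v_j})$; this uses the definition of $B_{v_j}$ so that in Cases 2 and 3 no leftover hyperedge can strictly improve the oracle's output. Next, in Cases 1--3, at most one hyperedge of $U_{v_j}$ survives in $\bigcup_{k > j} P_{v_k}$, because the hyperedge added to $P_{v_j}$ in Line \ref{algline:hupdate_P-1} is simultaneously removed from $P_{v_{j'}}$ for every other $v_{j'} \in e$ by Lines 18--20. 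In Case 4, the deferred assignment at Line \ref{algline:hupdate_R} trims $R_{v_j}$ as soon as some hyperedge of $U_{v_j}$ is selected downstream, again leaving at most one surviving hyperedge at $v_j$.

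The hypothesis $e \cap f = \{v_j\}$ for $e, f \in P_{v_j}$ plays a crucial bookkeeping role: combined with the local-subpartition invariant from Lemma \ref{lem:hconditionP} (iii), it guarantees that every hyperedge of $\bigcup_{v \in V} P_v$ has a uniquely identifiable anchor vertex, so that for any hyperedge $e \in I_{v_j}$ incident to a later vertex $w$ with $v_j \prec w$, $e$ has been removed from $P_w$ via Lines 18--20 or Line \ref{algline:hupdate_P-3}. This lets me certify, at every vertex $w$, that the contribution to $E_w$ from $\bigcup_{j \leq i} I_{v_j}$ is exactly the union of the local selections at the anchor vertices, and combined with the per-vertex claim above yields $(I \cup \bigcup_{j < i} I_{v_j}) \cap E_w \in \mathcal{I}_w$.

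The main obstacle will be the Case 4 scenario at a non-anchor vertex $w$: because $R_w$ is constructed lazily, one must verify that Line \ref{algline:hupdate_R} is invoked before any second hyperedge of $U_w$ could enter $\bigcup_{k > j} I_{v_k}$, so independence at $w$ is preserved. Establishing this lazy-update invariant, and weaving it together with the unique-anchor property guaranteed by the hypothesis, closes the induction.
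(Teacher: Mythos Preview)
Your plan tracks the paper's own proof closely. The paper also checks, for each $j<i$, that $I_{v_j}\cup\{e\}\in\mathcal{I}_{v_j}$ for every $e\in U_{v_j}\setminus(P_{v_j}\cup R_{v_j})$, that in Cases 1--3 this leftover set has at most one element, and that in Case 4 the deferred assignment at Line~\ref{algline:hupdate_R} limits the surviving upward hyperedges once the first downstream selection through $v_j$ occurs. Your inductive framing versus the paper's direct per-vertex argument is cosmetic.

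One point to sharpen concerns where the hypothesis $e\cap f=\{v_j\}$ actually enters. The ``unique anchor'' property you describe---that each hyperedge of $\bigcup_v P_v$ lies in exactly one $P_v$, and hence has been removed from $P_w$ for every other incident $w$---is already the local-subpartition invariant of Lemma~\ref{lem:hconditionP}~(iii) and does not require the hypothesis. What the hypothesis really buys is that $|P_{v_k}\cap E_{v_l}|\le 1$ for every $l\ne k$: when a hyperedge $e\in I_{v_k}$ first triggers the Case-4 update at some $v_l\in e$ with $l<k$, you must rule out that $I_{v_k}$ simultaneously selects a \emph{second} hyperedge $f$ through $v_l$; since $e,f\in I_{v_k}\subseteq P_{v_k}$ would then have $e\cap f\supseteq\{v_k,v_l\}$, the hypothesis forbids this. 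Without it, two hyperedges of a single $I_{v_k}$ could hit $v_l$ at once and the one-extra-edge reasoning at $v_l$ would fail. The paper's proof does not spell this out either (it simply declares the argument analogous to the graph case), so your plan is not missing anything the paper provides---but when you write the full proof, place the appeal to the hypothesis at this step rather than at the anchor-uniqueness step.
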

    \begin{proof}
        For any $j < i$, consider the end of the $j$-th iteration of the for-loop in Line \ref{algline:hfirstfor}. We have $I_{v_j} \cup \{e\} \in \mathcal{I}_{v_j}$ for any $e \in U_{v_j} \setminus (P_{v_j} \cup R_{v_j})$. If the $j$-th iteration of the for-loop falls into Cases 1, 2, or 3, then $U_{v_j} \cap (\bigcup_{k > j} P_{v_k}) = U_{v_j} \setminus (P_{v_j} \cup R_{v_j})$ contains at most one edge. Otherwise we have $U_ {v_j} \subseteq (\bigcup_{k > j} P_{v_k}) \cup (\bigcup_{k < j} R_{v_k})$, and $R_{v_j}$ will be updated to $\{e \in U_{v_j} \setminus (\bigcup_{k < j} R_{v_k}) \mid v_k \prec v\ \text{for some}\ v \in e\}$ if $e \in U_{v_j}$ with $v_k \in e$ is chosen by $I_{v_k}$ in the $k$-th iteration for some $k > j$. Therefore, (\ref{lemeq:hindependence}) is satisfied in the beginning of the $i$-th iteration of the for-loop.
    \end{proof}
    \begin{lemma}\label{lem:horderedapprox}
        Algorithm \algname{OrderedApprox$^*$} computes an $(\alpha + \delta(\gamma - 1) )$-\!\! approximate independent set $I$ in $\mathcal{I}$ in polynomial time if a given hypergraph $G$ satisfies that for $v \in V$, $e \cap f = \{v\}$ for $e, f \in D_v$ with respect to a given linear order $\prec$ of vertices $V$, where $\gamma$ is the width of $G$ with respect to $\prec$, and $\delta = \max_{e \in E} |e|$.
    \end{lemma}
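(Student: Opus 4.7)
The plan is to follow the template of the proof of Theorem \ref{thm:orderedapprox}, adapted to the hypergraph setting. By Lemmas \ref{lem:hconditionP} and \ref{lem:hindependence}, the collection $\mathcal{P} = \{P_v \mid v \in V\}$ produced by the algorithm is a local subpartition of $E$ with residual $R = \bigcup_{v \in V} R_v$, and $I = \bigcup_{v \in V} I_v$ is an independent set in $\mathcal{I}$. By Lemma \ref{lem:subpart}, it then suffices to show $|R| \leq \delta(\gamma - 1)|I|$.

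A crucial preliminary observation is that in every case of the algorithm one has $I_{v_i} \subseteq D_{v_i}$: Cases 1 and 4 are immediate, Case 3 explicitly removes the chosen $b \in U_{v_i}$ from the oracle output, and in Case 2 either the output of $\mathcal{A}_{v_i}(P_{v_i} \cup \{e\})$ does not contain $e$ or $I_{v_i}$ is set to $\mathcal{A}_{v_i}(P_{v_i})$, so in both sub-cases $I_{v_i}$ is drawn only from the pre-augmentation $P_{v_i} \subseteq D_{v_i}$. Consequently every edge $e \in I_{v_i}$ has $v_i$ as its maximum vertex under $\prec$.

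I would then bound $|R_{v_i}|$ case by case. Case 1 gives $R_{v_i} = \emptyset$. In Cases 2 and 3, $|R_{v_i}| = |U_{v_i}| - 1 \leq \gamma - 1$ and $|I_{v_i}| \geq 1$, so $\sum_{v \in V_2 \cup V_3} |R_v| \leq (\gamma - 1)|I|$. In Case 4, $R_{v_i}$ is initially empty and is updated only when $v_i$ is later removed from $X$ by some edge $e \in I_{v_m}$ with $v_i \in e \setminus \{v_m\}$. Since $e \in D_{v_m}$, no $u \in e$ satisfies $v_m \prec u$, so the triggering edge $e$ is excluded from the set defined in Line \ref{algline:hupdate_R}, and hence $|R_{v_i}| \leq |U_{v_i}| - 1 \leq \gamma - 1$.

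The essential new step relative to the graph case is the charging for Case 4: a single edge $e \in I_{v_m}$ can have up to $|e| - 1 \leq \delta - 1$ other vertices $v_j$, each of which may be removed from $X$ by this one edge, instead of just the unique other endpoint as in a graph. Since each Case-4 vertex leaves $X$ at most once, the number of $v \in V_4$ whose $R_v$ is ever updated is at most $(\delta - 1)|I|$, so $\sum_{v \in V_4} |R_v| \leq (\gamma - 1)(\delta - 1)|I|$. Summing yields $|R| \leq (\gamma - 1)|I| + (\gamma - 1)(\delta - 1)|I| = \delta(\gamma - 1)|I|$, and Lemma \ref{lem:subpart} gives the desired approximation ratio $\alpha + \delta(\gamma - 1)$; polynomial running time is clear from the algorithm's description. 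The main obstacle is establishing the observation $I_{v_i} \subseteq D_{v_i}$, without which the triggering edge $e$ might remain in the set on Line \ref{algline:hupdate_R}, raising the per-vertex bound to $\gamma$ and spoiling the factor $\delta(\gamma - 1)$.
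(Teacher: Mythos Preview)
Your proposal is correct and follows essentially the same route as the paper's proof: invoke Lemmas \ref{lem:hconditionP} and \ref{lem:hindependence}, bound $|R_v|$ in each of the four cases, charge the Case-4 updates to at most $\delta-1$ vertices per edge of $I$, and apply Lemma \ref{lem:subpart}. Your explicit observation that $I_{v_i}\subseteq D_{v_i}$ (so the triggering edge is excluded from the set in Line~\ref{algline:hupdate_R}) is a detail the paper leaves implicit when asserting $|R_{v_i}|\le\gamma-1$ in Case~4; otherwise the arguments coincide.
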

    \begin{proof}
        For any $v$ in $V$, let $I_v$, $P_v$, and $R_v$ be the sets obtained by Algorithm \algname{OrderedApprox$^*$}. By Lemma \ref{lem:hindependence}, $I = \bigcup_{v \in V} I_v$ is an independent set in $\mathcal{I}$. Let us first consider the size of $R_{v_i}$. If the $i$-th iteration of the for-loop falls into Case 1, then we have $R_{v_i} = \emptyset$. If it falls into Cases 2 or 3, then we have $|R_{v_i}| = |U_{v_i}| - 1 \leq \gamma - 1$ and $I_{v_i} \not = \emptyset$, which implies $|R_{v_i}| \leq (\gamma - 1)|I_{v_i}|$. Otherwise, $v_i$ is added to $X$ and $R_{v_i}$ is set to $R_{v_i} = \emptyset$ at the end of the $i$-th iteration. 
        Note that $R_{v_i}$ might be updated to $\{e \in U_{v_i} \setminus (\bigcup_{l < j} R_{v_l}) \mid v_j \prec u\ \text{for some}\ u \in e\}$ if $e \in U_{v_i}$ is chosen by $I_{v_j}$ in the $j$-th iteration for some $j > i$. In either case, we have $|R_{v_i}| \leq \gamma - 1$, and there exists an edge $e$ in $U_{v_i} \cap I_{v_j}$ if $R_{v_i} \not = \emptyset$. For $p=1,2,3$, and $4$, let $V_p$ denote the set of vertices $v_i$ such that the $i$-th iteration of the for-loop falls into Case $p$. Then we have 
        \begin{align*}
        \sum_{v \in V_1} |R_v| &= 0,\\
        \sum_{v \in V_2 \cup V_3} |R_v| &\leq (\gamma - 1)\sum_{v \in V_2 \cup V_3} |I_v| \leq (\gamma - 1)|I|,\ \text{and}\\
        \sum_{v \in V_4} |R_v| &\leq (\delta -1)(\gamma - 1)|I|. 
        \end{align*}
        Therefore, we obtain the following inequality
        \begin{align*}
            |R| = \sum_{v \in V} |R_v| \leq \delta(\gamma - 1)|I|.
        \end{align*}
        By Lemma \ref{lem:hconditionP}, $\mathcal{P} = \{P_v \mid v \in V\}$ is a local subpartition of $E$ with the residual $R=\bigcup_{v \in V}R_{v}$. Note that Lemma \ref{lem:subpart} still holds even for the hypergraph case. Thus, we can see that $I$ is an $(\alpha + \delta(\gamma -1))$-approximate independent set in $\mathcal{I}$.
    \end{proof}
    We have the following corollary.
    \begin{corollary}\label{cor:horderedapprox-forest}
        Algorithm \algname{OrderedApprox$^*$} computes an $\alpha$-approximate independent set $I$ in $\mathcal{I}$ in polynomial time if $G$ is a hypergraph of width $1$.
    \end{corollary}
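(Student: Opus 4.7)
The plan is to deduce this corollary directly from Lemma \ref{lem:horderedapprox} by verifying that its hypotheses hold whenever $G$ has width $1$. Lemma \ref{lem:horderedapprox} guarantees an $(\alpha + \delta(\gamma-1))$-approximation provided that for every $v \in V$ and every pair $e,f \in D_v$ one has $e \cap f = \{v\}$; substituting $\gamma = 1$ collapses the second term to $0$ and yields the claimed ratio of $\alpha$, so the only task is to certify the disjointness condition on downward edges.

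First, I would verify the disjointness condition $e \cap f = \{v\}$ for all $v \in V$ and $e,f \in D_v$. Suppose for contradiction that some $u \neq v$ lies in $e \cap f$. By definition of $D_v$, every vertex of $e$ and of $f$ other than $v$ precedes $v$ in $\prec$, so in particular $u \prec v$. Since $v \in e$ and $v \in f$ while $u \prec v$, both $e$ and $f$ belong to $U_u$, giving $|U_u| \geq 2$. This contradicts the assumption that $G$ has width $1$, namely $\max_{w \in V}|U_w| \leq 1$. Hence the hypothesis of Lemma \ref{lem:horderedapprox} is satisfied.

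Next, I would invoke Lemma \ref{lem:horderedapprox} with $\gamma = 1$. The lemma then produces, in polynomial time, an independent set $I \in \mathcal{I}$ with
\begin{align*}
    \max_{J \in \mathcal{I}}|J| \leq (\alpha + \delta(\gamma - 1))\,|I| = \alpha\,|I|,
\end{align*}
which is exactly the $\alpha$-approximation guarantee claimed in Corollary \ref{cor:horderedapprox-forest}.

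There is no substantive obstacle here: the entire content of the corollary is the observation that a width-$1$ hypergraph automatically meets the local-disjointness precondition of Lemma \ref{lem:horderedapprox}, after which the bound follows by arithmetic. The only subtlety worth stating explicitly is that $\delta$, which could a priori be large relative to other parameters, is multiplied by $\gamma - 1 = 0$ and therefore disappears from the approximation ratio.
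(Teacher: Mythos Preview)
Your proposal is correct and matches the paper's (implicit) approach: the paper states Corollary~\ref{cor:horderedapprox-forest} immediately after Lemma~\ref{lem:horderedapprox} without proof, treating it as a direct specialization to $\gamma=1$. Your additional verification that width~$1$ forces the disjointness hypothesis $e\cap f=\{v\}$ for distinct $e,f\in D_v$ is exactly the detail the paper leaves to the reader.
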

     
    We can also show that Algorithm \algname{DecomApprox} can be generalized for the hypergraph case. Namely, for a hypergraph $G=(V, E)$ of width $\gamma$, Algorithm \algname{DecomApprox$^*$} first decomposes edge set $E$ into $E_1, \ldots ,E_\gamma$, where $G_i = (V, E_i)$ is a hypergraph of width $1$, for each $E_i$, applies \algname{OrderedApprox$^*$} to compute an independent set $I_i$, and chooses a maximum independent set among them.
    \begin{algorithm}[ht]
        \caption{\algname{DecomApprox$^*$}$(G, \mathcal{I}, \prec)$}
        \label{alg:hrepaet}
        \begin{algorithmic}[1]
            \Require An independence system $(E, \mathcal{I})$ defined on a hypergraph $G = (V, E)$ and a linear order $v_1 \prec v_2 \prec \dots \prec v_n$ of vertices $V$, where $\gamma$ is the width of hypergraph $G$ with respect to $\prec$.
            \Ensure An independent set in $\mathcal{I}$
            \vspace{5pt}
            \State $\mathcal{Q} := \emptyset$
            \For{$i = 1, \dots, n$}
                \State $U_{v_i}^\prime := \{e \in U_{v_i} \mid v_i \prec v\ \text{for all}\ v \in e \setminus \{v_i\}\}$.
                \For{each $e \in U_{v_i}^\prime$}
                    \State $u := \max_{v \in e}v$.
                    \State $\mathcal{Q}_{e} := \{Q^\prime \in \mathcal{Q} \mid Q^\prime \cap U_v = \emptyset\ \text{for all}\ v \in e \setminus \{u\}\}$.
                    \If{$\mathcal{Q}_e \not= \emptyset$}
                        \State $Q \in \mathcal{Q}_e$
                    \Else
                        \State $Q := \emptyset$.
                        \State $\mathcal{Q} := \mathcal{Q} \cup \{Q\} $.
                    \EndIf
                    \State $Q := Q \cup \{e\}$.
                \EndFor
            \EndFor
            \For{$Q \in \mathcal{Q}$}
                \State $I_Q :=d \algname{OrderedApprox$^*$}(G[Q],\mathcal{I}[Q], \prec)$.
            \EndFor
            \State $I \in \arg \max \{|I_Q| \mid Q \in \mathcal{Q}\}$.
            \State Output $I$ and halt.
        \end{algorithmic}
    \end{algorithm}

    Thus we have the following theorem.
    \begin{theorem}\label{thm:hDecomApprox}
        Algorithm \algname{DecomApprox$^*$} computes an $(\alpha + \alpha(\delta -1)(\gamma-1))$-approximate independent set $I$ in $\mathcal{I}$ in polynomial time, where $\gamma$ is the width of a given graph $G$ with respect to a given linear order of vertices $V$, and $\delta = \max_{e \in E} |e|$.
    \end{theorem}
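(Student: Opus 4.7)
The plan is to mirror the analysis of Theorem~\ref{thm:DecomApprox}: I will show that each class $Q\in\mathcal{Q}$ produced by the grouping step gives rise to a width-$1$ hypergraph $G[Q]$, so that Corollary~\ref{cor:horderedapprox-forest} guarantees that $I_Q$ is an $\alpha$-approximate independent set of $\mathcal{I}[Q]$, and then I will bound $|\mathcal{Q}| \le 1 + (\delta-1)(\gamma-1)$. The approximation ratio then follows from a standard partition argument.

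For the width-$1$ claim, I would induct on the order in which edges are inserted into $\mathcal{Q}$. Just before an edge $e$ is placed in $Q$, the guard defining $\mathcal{Q}_e$ forces $Q \cap U_v = \emptyset$ for every $v \in e \setminus \{u\}$, where $u = \max e$. Since $e \in U_v$ exactly when $v$ is a non-maximum vertex of $e$, the insertion of $e$ alters $Q \cap U_v$ only for these same vertices, making each of them $\{e\}$; for all other $v$, $|Q \cap U_v| \le 1$ holds by the inductive hypothesis. Hence $|U_v \cap Q| \le 1$ for every $v$, and $G[Q]$ has width $1$ with respect to $\prec$.

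For the bound on $|\mathcal{Q}|$---the heart of the argument---I would count ``blockings''. When a fresh class is created while processing an edge $e$, every existing $Q^\prime\in\mathcal{Q}$ must fail the emptiness condition at some $v \in e\setminus\{u\}$, so $Q^\prime \cap U_v \neq \emptyset$. For each fixed $v \in e\setminus\{u\}$, the width-$1$ property gives $|Q^\prime \cap U_v| \le 1$, and $e$ itself (which lies in $U_v$) is not yet placed, so the number of $Q^\prime$ blocked at $v$ is at most $|U_v|-1 \le \gamma-1$. Summing over the $|e \setminus\{u\}| \le \delta-1$ possible blockers yields $|\mathcal{Q}| \le (\delta-1)(\gamma-1)$ just before the new class is introduced, hence $|\mathcal{Q}| \le 1+(\delta-1)(\gamma-1)$ at every step. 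I expect this blocking argument, in particular the subtle point that $e$ itself is unplaced during the count, to be the main obstacle.

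To conclude, every edge has a unique minimum vertex, so $\{Q \mid Q\in\mathcal{Q}\}$ partitions $E$. For any $J\in\mathcal{I}$, the restriction $J \cap Q$ lies in $\mathcal{I}[Q]$, so $|J\cap Q| \le \alpha|I_Q|$ by the approximation guarantee of each $I_Q$, giving
\[
|J| \;=\; \sum_{Q\in\mathcal{Q}}|J\cap Q| \;\le\; \alpha\,|\mathcal{Q}|\,\max_{Q\in\mathcal{Q}}|I_Q| \;\le\; \bigl(\alpha+\alpha(\delta-1)(\gamma-1)\bigr)|I|,
\]
which is the claimed approximation ratio. Polynomial running time is immediate since $|\mathcal{Q}| \le 1+(\delta-1)(\gamma-1) \le |E|$ and \algname{OrderedApprox$^*$} runs in polynomial time.
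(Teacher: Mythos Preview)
Your proposal is correct and follows essentially the same approach as the paper's proof: both establish that each $G[Q]$ has width $1$ via the guard condition, invoke Corollary~\ref{cor:horderedapprox-forest}, and then bound $|\mathcal{Q}|$ by a blocking count over the non-maximum vertices of the edge that triggers a new class. Your treatment of the $|\mathcal{Q}|$ bound is in fact slightly cleaner than the paper's, which splits off the minimum vertex separately to get $(\delta-2)(\gamma-1)+\gamma$ before simplifying to the same $1+(\delta-1)(\gamma-1)$.
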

    \begin{proof}
        Let $\mathcal{Q}$ and $\{U_{v}^\prime \mid v \in V\}$ be families of subsets of $E$ obtained by the algorithm. Note that $\mathcal{Q}$ is a partition of $E$ since for any $v \in V$ and $e \in U_{v}^\prime$, $e \in Q$ for some $Q \in \mathcal{Q}$, and $\{U_{v}^\prime \mid v \in V\}$ is a partition of $E$. 
        Moreover, for any $Q \in \mathcal{Q}$, $G[Q] = (V, Q)$ is a hypergraph of width $1$, i.e., $|Q \cap U_v| \leq 1$ holds for any $v \in V$, since for any $e \in Q$ and for any $v \in e \setminus \{\max_{u \in e} u\}$, $(Q \setminus \{e\}) \cap U_v = \emptyset$ holds. By Corollary \ref{cor:horderedapprox-forest}, $I_Q$ is an $\alpha$-approximate independent set of $\mathcal{I}[Q]$ for every $Q \in \mathcal{Q}$. Furthermore, any independent set $J$
        in $\mathcal{I}$ satisfies that
        \begin{align*}
            |J| = \sum_{Q \in \mathcal{Q}} |J \cap Q| \leq \sum_{Q \in \mathcal{Q}} \alpha|I_Q| \leq \alpha|\mathcal{Q}|\max_{Q \in \mathcal{Q}}|I_Q| = \alpha|\mathcal{Q}||I|.
        \end{align*}
        For any  $e \in E$, let $W_e = \{w \in V \setminus \{\min_{u \in e} u\} \mid e \in U_w\}$. Since $\delta = \max_{e \in E} |e|$ and $\gamma = \max_{v \in V} |U_v|$ hold, we have $|W_e| \leq \delta -2$ and $|U_w \setminus \{e\}| \leq \gamma -1$ for any $w \in W_e$. Then we have
        \begin{align*}
            |\mathcal{Q}| \leq (\delta - 2)(\gamma -1) + \gamma = (\delta -1)(\gamma -1) + 1, 
        \end{align*}
        which completes the proof.
    \end{proof}
    We also have the following corollary.
    \begin{corollary}
        Algorithm \algname{DecomApprox$^*$} computes an $(\alpha + \alpha(\delta -1)(k-1))$-approximate independent set $I$ in $\mathcal{I}$ in polynomial time if $G$ is a hypergraph of width $k$, where $\delta = \max_{e \in E} |e|$.
    \end{corollary}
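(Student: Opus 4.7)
The plan is to derive this corollary as a direct specialization of Theorem \ref{thm:hDecomApprox}. The statement to prove concerns a hypergraph $G$ of width $k$, meaning that the minimum width over all linear orders of $V$ is at most $k$, and asserts an approximation ratio of $\alpha + \alpha(\delta-1)(k-1)$ for Algorithm \algname{DecomApprox$^*$}.

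First, I would recall the structural fact noted just after Definition \ref{def:horder}: a hypergraph of width $k$ has the property that every induced subhypergraph $G[W]$ contains a vertex of degree at most $k$. Exploiting this, one can compute a linear order $\prec$ of $V$ certifying width $k$ in linear time, by repeatedly selecting a minimum-degree vertex in the remaining subhypergraph (placing it as the next smallest element in $\prec$) and deleting it. This gives $|U_v| \le k$ for every $v \in V$ with respect to the resulting $\prec$.

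Next, I would feed this order $\prec$ into Algorithm \algname{DecomApprox$^*$} together with $(E, \mathcal{I})$ and invoke Theorem \ref{thm:hDecomApprox}. Since the width of $G$ with respect to $\prec$ is at most $k$, setting $\gamma = k$ in the theorem yields an approximate independent set of ratio $\alpha + \alpha(\delta-1)(\gamma-1) \le \alpha + \alpha(\delta-1)(k-1)$. Polynomial running time is inherited: the linear-order computation is linear, and Theorem \ref{thm:hDecomApprox} already guarantees polynomial time for \algname{DecomApprox$^*$}.

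There is no real obstacle here: the corollary is a routine instantiation, and the only thing to verify carefully is the existence and linear-time computability of a linear order achieving width $k$, which is already justified in the remark following Definition \ref{def:horder}. A short proof sketch noting these two ingredients and invoking Theorem \ref{thm:hDecomApprox} suffices.
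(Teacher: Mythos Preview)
Your proposal is correct and matches the paper's approach: the corollary is stated without proof, as an immediate instantiation of Theorem \ref{thm:hDecomApprox} once one has a linear order of width at most $k$, whose linear-time computability is precisely the remark following Definition \ref{def:horder} that you invoke.
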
 
    
    \section*{Acknowledgments}
    I give special gratitude to Kazuhisa Makino for his valuable discussions and carefully proofreading of the manuscript. I am also grateful to Yusuke Kobayashi, Akitoshi Kawamura, and Satoru Fujishige for constructive suggestions.
    This work was supported by the joint project of Kyoto University and Toyota Motor Corporation, titled "Advanced Mathematical Science for Mobility Society".

\bibliographystyle{elsarticle-num}
\bibliography{IS}
\end{document}